\author{Pier Luigi Conti \\ Alberto Di Iorio}
\date{}
\newcommand{\U}{\mathcal{U}_N}
\newcommand{\Y}{\mathcal{Y}_N}
\newcommand{\T}{\mathcal{T}_N}
\newcommand{\E}{\mathbb{E}}
\newcommand{\MP}{\mathbb{P}}
\newcommand{\D}{\mathbf{D}_N}
\newtheorem{Lemma}{Lemma}
\newtheorem{prop}{Proposition}
\newtheorem{cor}{Corollary}
\newtheorem{rem}{Remark}
\title{Analytic inference in finite population framework via resampling}
\begin{document}
\begin{titlepage}
\clearpage\maketitle
\thispagestyle{empty}

\begin{abstract}
The aim of this paper is to provide a resampling technique that allows us to make inference on superpopulation parameters
in finite population setting.
Under complex sampling designs, it is often difficult to obtain explicit results about superpopulation parameters of interest,
especially in terms of confidence intervals and test-statistics.
Computer intensive procedures, such as resampling, allow us to avoid this problem.
To reach the above goal, asymptotic results about empirical processes in finite population framework are first obtained.
Then, a resampling procedure is proposed, and justified {\em via} asymptotic considerations.
Finally, the results obtained are applied to different inferential problems and a simulation study is performed to test the goodness of our proposal.

\vskip1cm
\noindent
{\bf Keywords:} Resampling, finite populations, H\'{a}jek estimator, empirical process, statistical functionals.

\end{abstract}
\end{titlepage}

\onehalfspacing

\section{Introduction}
\label{s1}
The use of superpopulation models in survey sampling has a long history, going back (at least) to
\cite{r1}, where the limits of assuming the population characteristics as {\em fixed}, especially in economic and social studies,
are stressed.
As clearly appears, for instance, from \cite{r2} and \cite{pfeffer93}, there are basically two types of inference in the finite populations setting.
The first one is \textit{descriptive} or \textit{enumerative} inference, namely inference about finite population parameters.
This kind of inference is a static ``picture'' on the current state of a population, and does not take into account the mechanism generating the
characters of interest of the population itself.
The second one is \textit{analytic} inference, and consists in inference on superpopulation parameters.
This kind of inference is about the process that generates the finite population. In contrast  with \textit{enumerative} inference results, \textit{analytic} ones are more general, and
still valid for every finite population generated by the same superpopulation model.

The present paper essentially focuses on analytic inference for nonparametric superpopulation models. In classical (nonparametric) statistics,
under the Fisherian inferential  framework, a popular approach consists in approximating the distribution of estimators and test-statistics {\em via} bootstrap (cfr. \cite{r3}, \cite{romano88}, \cite{romano89} and
references therein).
Efron's bootstrap procedure (\cite{r3}) is based on a crucial assumption: data are independent and identically distributed ($i.i.d$).
Unfortunately, this is not the case of finite population framework, where the presence of a complex sampling design induces dependences in the data.
For this reason, several different resampling techniques in finite populations setting have been proposed in the literature.

A large portion of such techniques essentially refers to descriptive inference, and
rests on the idea of mimicking the moments of the sampling distributions of statistic of interest.
In particular, in case of Horvitz-Thompson estimator of the population mean,
this idea reduces to require that the variance of the resampled statistic should be equal (or at least very close)
to the variance estimate of the original statistic.
This is usual attained by resampling units according to some special sampling design that takes into account the dependence between units: cfr. \cite{r10} and references therein.

The arguments above are considerably different from those commonly used to justify the classical bootstrap, that are based on asymptotic considerations involving the whole sampling distribution of a statistic, not only the first two moments.
In particular, in \cite{r4}, usual Efron's bootstrap is justified by proving that the asymptotic distribution of a bootstrapped statistic
coincides with that of the original statistic.
In our knowledge, the only papers that develop resampling methods for finite populations justified {\em via} asymptotic arguments are \cite{chatter11}, \cite{r5}, \cite{contmarmec}.
All the above mentioned papers are based on the fixed population approach, {\em i.e.} refer to the estimation of finite population parameters (descriptive inference).
Furthermore, \cite{chatter11} is confined to quantile stimation under simple random sampling. The results are then extended to general $\pi$ps designs in \cite{r5}.

In \cite{contmarmec} a class of resampling procedures based on a predictive approach is defined, and their asymptotic distribution is studied.
Such procedures are essentially taylored for the estimation of finite population parameters, in a descriptive inference perspective.
In the present paper, we will generalize the results in  \cite{contmarmec} to analytic inference. As it will be seen in the sequel, the analytic-inference perspective
dramatically changes the asymptotic distributions to be considered. As a consequence, the resampling procedures defined in  \cite{contmarmec} do not work when
superpopulation parameters are involved; the only exception is the so-called ``multinomial'' approach, defined first in \cite{pfefsver04}.

In a recent paper by \cite{wang}, and more rigorously in \cite{r24}, the authors obtain a result substantially equivalent to Proposition $\ref{main}$. However, they have the only purpose of establishing a functional central limit theorem, without proposing a resampling scheme that allows to recover the large sample distribution of  statistics of interest.
On one hand, the regularity assumptions in \cite{r24} are slightly weaker than ours; on the other hand, they assume the asymptotic normality of the distribution function estimator for the considered sampling design, while in the present paper this assumption is avoided, and replaced by the {\em high entropy} condition for the considered sampling designs.
Such a condition, although slightly more restrictive than those in \cite{r24}, allows us to explicitly write down the covariance kernel function of the asymptotic law of the considered functionals, without resorting to the computation of second order inclusion probabilities, usually a numerically complicate task for almost all $\pi$ps sampling designs.
Moreover,
approximations of second order inclusion probabilities, such as H\'{a}jek approximation (cfr. \cite{r16}) essentially work only for high entropy sampling designs.

The paper is organized as follows. In Section $\ref{s2}$ the assumptions on which the paper rests are stated. Sections $\ref{s3}$, $\ref{s4}$ are devoted to
 establish the main asymptotic results for a large class of estimators. In Sections $\ref{s5}$, $\ref{s5b}$, the proposed resampling procedure and its asymptotic justification
 are studied.
 Finally, in Section $\ref{s6}$, some applications ale illustrated, and studied {\em via} simulation.

\section{Assumptions and basics}\label{s2}
Let $\U$ be a finite population of size $N$ and $s \subset \U$ a sample of size $n_s$. For each unit in the population, denote by
$$D_i=\begin{cases}1 &\mbox{if unit }i \in s\\
 0 &\mbox{otherwise}
 \end{cases}$$
 the sample inclusion (Bernoulli) random variable (r.v.), and let  $\mathbf{D}_N$ be the vector composed by the $N$ random variables $D_1,\ldots,D_N$.
The probability distribution $P$ of the r.v. $\D$  is the sampling design.\\
 For each $i,j \in \U$ the moments $\pi_i=\E_P[D_i]$ and $\pi_{ij} = \E_P[D_iD_j]$ are the first and second order inclusion probabilities. The sum  $n_s=D_1+D_2+\ldots +D_N$ is the effective sample size; in the sequel we will focus on fixed size sampling designs: $n_s\equiv n$.\\
A Poisson design (denoted by $Po$) with parameters $p_1,\ldots,p_N$ has mass function equal to:
$$
 {Po}(\mathbf{D}_N)=\prod_{i=1}^N{p_i}^{D_i}(1-p_i)^{1-D_i}.
$$
The next basic sampling design we consider is the rejective sampling,
denoted by the symbol $R$. Rejective sampling is essentially a Poisson sampling conditioned on a fixed sample size (for more see \cite{r15}).\\
A measure of the randomness of a sampling design $P$ is its entropy:
$$H(P)=\E_P[\log P(\mathbf{D}_N)]=-\sum_{\mathbf{D}_N} P(\mathbf{D}_N)\log(P(\mathbf{D}_N)).$$
Is well known that the Poisson sampling possesses Maximum Entropy among sampling designs with fixed first order inclusion probabilities. The rejective sampling, being strongly related to the Poisson sampling, inherits this property, and it is possible to show (cfr. \cite{r16}) that it is the maximum entropy design among sampling designs of fixed size and fixed first order inclusion probabilities.\\
To quantify the similarity between a generic sampling designs $P$ and the rejective design $R$ we use the Hellinger distance, defined as
\begin{equation}
d_H(P,R)=\sum_{\D}\left(\sqrt{P(\D)}- \sqrt{R(\D)} \right)^2.
\end{equation}
The basic assumptions on which all subsequent results  rest are listed below.
\begin{itemize}
\item[H1.]$(\U;\ N\geq 1)$ is a sequence of finite populations of increasing size $N$.
\item[H2.]Let $Y$ be the character of interest, and let $T_1,T_2,\ldots,T_L$ be the design variables. Denote further by $\mathbb{P}$ the superpopulation proability distribution of the r.v.s
$(Y_i,T_{i1},$ $\ldots,$ $T_{iL}).$
For each size $N$, $(y_i,t_{i1},\ldots,t_{iL}),$ $i=1,2,\ldots,N$ are realizations of a superpopulation $\{(Y_i,T_{i1},$ $\ldots,$
$T_{iL}), \ i=1,\ldots,N\}$ composed by $i.i.d$ $(L+1)$-dimensional random vectors.
The symbols $\Y$, $\T$ are used to denote the vector of $N$ population $y_i$s values and the $N \times L$ matrix of
population $t_{ij}$s values ($j=1, \, \dots , \, L$), respectively.
\item[H3.]For each population $\U$, sample units are selected according to a fixed size sample design with positive first order inclusion probabilities $\pi_1,\ldots,\pi_N$ and sample size $n=\pi_1+\ldots+\pi_N$. The first order inclusion probabilities are taken proportional to a variable $x_i=g(
t_{i1},\ldots,t_{iL})$, $i=1,\ldots,N$, where $g(\cdot)$ is an arbitrary positive function. For the sake of simplicity, we will assume that, for each $i,$ $\pi_i=nx_i/\textstyle \sum_j x_j$.
Clearly, the quantities $n,\pi_i,D_i$ depend on $N$. To avoid complications in the notation we will use the symbols $n,\pi_i,D_i$, omitting the explicit dependence on $N$.
Furthermore  is assumed that
\begin{equation}
\E_{\MP}[\pi_i(1-\pi_i)]=d \label{L1}
\end{equation}
with $0<d<\infty$.
\item[H4.]The sampling fraction tends to a finite, non-zero limit:
$$\lim_{N\to\infty}\frac{n}{N}=f, \ 0<f<1.$$
\item[H5.]The actual sampling design $P$, with inclusion probabilities $\pi_1,\ldots,\pi_N$ satisfies the relationship
$$d_H(P,R)\to0, \mbox{ as } N\to\infty,$$
where $R$ is the rejective sampling with the same inclusion probabilities as $P$.
\item[H6.] $\E_\MP[X_1^2]<\infty$.
\end{itemize}
Hypothesis H2, H3 allow us to consider a possible dependence between the interest variable and the design variables. This is the usual situation when we deal with  $\pi ps$ sampling designs, where such a dependence is used to improve the efficiency of total and mean estimators. On the other hand, the specific form of dependence is totally general.\\
Assumption H5 essentially requires that the considered sampling design has to be an asymptotically high entropy sampling designs. The properties of high entropy sampling designs are widely discussed in literature; see, for instance, \cite{r15}, \cite{r17}, \cite{r18}. One of these properties is that high entropy sampling designs with the same inclusion probabilities, have the same asymptotic behaviour and it depends only on first order inclusion probabilities.\\
From now on, we will denote by $F(y)$ the superpopulation distribution function of the variable of interest $Y$, by $G(x)$ the distribution function of the design variable $X
= g( T_1 , \, \dots , \, T_L )$, and by $H(x,y)$ the joint distribution function of the r.v. $(X,Y)$.\\
The finite population distribution function is defined as:
\begin{equation}
F_N(y)=\dfrac{1}{N}\sum_{i=1}^NI_{(y_i\leq y)}
\end{equation}
where $I_{(y_i \leq y)}$ is the usual indicator taking value 1 if $y_i$ lies in $(-\infty,y]$, and $ 0$ otherwise.\\
A superpopulation parameter (hyperparameter, for short) is a functional of $F$, namely:
\begin{equation}
\label{parameter}
\theta=\theta(F)
\end{equation}
One of the most used and intuitive approaches to estimate a hyperparameter of the form (\ref{parameter}) consists in replacing $F$ in (\ref{parameter}) by an appropriate estimate.\\
As an estimator of $F$ we consider here the H\'{a}jek ratio estimator:
\begin{equation}\label{Hajek}
\hat{F}_H(y)=\dfrac{\displaystyle\sum_{i=1}^N \dfrac{1}{\pi_i}D_iI_{(y_i\leq y)}}{\displaystyle\sum_{i=1}^N\dfrac{1}{\pi_i}D_i}.
\end{equation}
Before ending the present section, we point out that all results
of the subsequent sections
could be obtained, with minor variations,  by using the Horvitz-Thompson estimator of $F$
\begin{equation}\label{HTF}
\hat{F}_{HT}(y)= \frac{1}{N} \sum_{i=1}^N \dfrac{1}{\pi_i} D_i I_{(y_i\leq y)} .
\end{equation}
However, we prefer the H\'{a}jek  estimator since $(\ref{Hajek})$ it is a proper estimator of the distribution function $F$.
The same is not generally true for the Horvitz-Thompson estimator $( \ref{HTF} )$.

\section{Empirical process in finite population sampling: asymptotic results}\label{s3}
The aim of this section is to study the limiting distribution of that H\'{a}jek  estimator (\ref{Hajek}) under
both the source of  randomness due to the sample selection and the source of randomness due to the population generation.
To this purpose, we have to study the stochastic process $W_H(\cdot)=(W_H(y),\ y \in \mathbb{R})$, defined as
\begin{equation}\label{Proc}
W^H(y)=\sqrt{n}(\hat{F}_H(y)-F(y)), \ y\in \mathbb{R}
\end{equation}
The process (\ref{Proc}) can be partitioned into the sum of two stochastic processes
\begin{equation}\label{Proc_dec}
\underbrace{\sqrt{n}(\widehat{F}_H-F)}_{\substack{\text{Total} \\\text{Randomness }}}=
\underbrace{\sqrt{n}(\widehat{F}_H-F_N)}_{\substack{\text{Sampling}\\ \text{Randomness}}}+\sqrt{\frac{n}{N}}\underbrace{\sqrt{N}(F_N-F)}_{
\substack{\text{Superpopulation}\\ \text{Randomness}}}=W_n^H+\sqrt{\frac{n}{N}}W_N,
\end{equation}
where
\begin{eqnarray}
W_n^H(y) = \sqrt{n}(\widehat{F}_H (y) -F_N (y) ) , \ y\in \mathbb{R}
\label{eq:sampling_rand}
\end{eqnarray}
depends on the sampling design (the sample selection randomness), and
\begin{eqnarray}
W_N(y) \sqrt{N} (F_N (y) - F(y)) , \ y\in \mathbb{R}
\label{eq:superpop_rand}
\end{eqnarray}
is a classical empirical process, and depends on the data generating process (superpopulation randomness).
In the sequel, we will refer to the process $(\ref{Proc})$ as an \textit{empirical process in finite population sampling}.\\
In the present section we will establish the asymptotic law of the empirical process $(\ref{Proc})$.
As it will be seen in Proposition $\ref{main}$, the limiting law of $(\ref{Proc})$ is
different from the asymptotic law of the usual empirical process for $i.i.d.$ data.
In our case the sample data are neither independent nor identically distributed
due to the effect of the sampling design, and this affects the asymptotic behaviour of  $(\ref{Proc})$. In addition,
the limiting law of $(\ref{Proc})$ heavily depends on the possible dependence between the character of interest and the design variables.

The limiting law of the process $(\ref{eq:sampling_rand})$, conditionally on $y_i$s and $t_{ij}$s, $(j=1, \, \dots , \, L)$, is studied in \cite{contmarmec}, where
Claim $1$ is proved.
Denote, as usual, by $D[-\infty,\infty]$
the space of \textit{càdlàg} (continue à droite, limite à gauche) functions defined on the (extended) real line, endowed with the Skorokhod topology. The compact sentence
``for almost all $y_i$s, $t_{ij}$s'' means ``for a set of sequences of $y_i$ and $t_{ij}$ values that are generated by the superpopulation model with
$\mathbb{P}$-probability 1''.

\begin{prop}\label{main}
Assume the sampling design $P$ satisfies conditions $H1-H6$. Then, the following three claims hold.

\noindent {\rm{Claim}} $1$ $({\mathrm{Conditional \; convergence}})$
Conditionally of $y_i$s, $t_{ij}$s ($j=1, \, \dots , \, L$), and for almost all
$y_i$s, $t_{ij}$s,
the sequence of random functions $(W_n^H(\cdot),\ N\geq 1)$ converges weakly in $D[ -\infty,\infty ]$
equipped with the Skorokhod topology, to a Gaussian process $\widetilde{W}_1(\cdot)=(\widetilde{W}_1(y),\ y\in \mathbb{R})$ with zero mean function and covariance kernel
\begin{align} \label{C1}
\nonumber C_1(y,t)&=f\left\lbrace \frac{\mathbb{E}_{\MP}[X_1]}{f}K_{-1}(y\wedge t) -1\right\rbrace F(y\wedge t)-\frac{f^3}{d}
\left(1-\frac{K_{+1}(y)}{\mathbb{E}_{\MP}[X_1]} \right)\left(1-\frac{K_{+1}(t)}{\mathbb{E}_{\MP}[X_1]} \right)F(y)F(t)\\ &-
f\left\lbrace \frac{\mathbb{E}_{\MP}[X_1]}{f}(K_{-1}(y)+K_{-1}(t)-\mathbb{E}_{\MP}[X_1^{-1}]-1) \right\rbrace F(y)F(t)
\end{align}
with $d$ given by $(\ref{L1})$, and
\begin{eqnarray}
K_{l} (y) = \mathbb{E}_{\MP} \left [ \left . X_i^{l} \, \right \vert Y_i \leq y \right ] , \;\;
y \in {\mathbb{R}}, \; l = 0, \, \pm 1, \, \pm 2 . \nonumber
\end{eqnarray}

\noindent {\rm{Claim}} $2$ $({\mathrm{Unconditional \; convergence}})$
The sequence of random functions $(W_n^H(\cdot),\ N\geq 1)$, converges weakly, in $D[-\infty,\infty]$
equipped with the Skorokhod topology, to a Gaussian process ${W}_1(\cdot)=({W}_1(y),\ y\in \mathbb{R})$ with zero mean function and covariance kernel
$( \ref{C1} )$.

\noindent {\rm{Claim}} $3$ $({\mathrm{Main \; result}})$
The two sequences $(W_n^H(y),\ y \in \mathbb{R})$ and $(W_N(y),\ y \in \mathbb{R})$ are asymptotically independent.
As a consequence, the whole process $(W^H(y),\ y \in \mathbb{R})$ converges weakly in $D[-\infty,\infty]$ endowed with the Skorokhod topology, to a Gaussian process $W$ with zero mean function and covariance kernel
\begin{equation}\label{Kernel}
C(y,t)=C_1(y,t)+fC_2(y,t)
\end{equation}
where $C_1(y,t)$ and $C_2(y,t)$ are given by $(\ref{C1})$ and
\begin{eqnarray}\label{C2}
C_2(y,t)=F(y\wedge t)-F(y)F(t) ,
\end{eqnarray}
respectively.
\end{prop}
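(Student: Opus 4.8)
The plan is to establish the three claims in order, leveraging the conditional result (Claim 1) which is quoted from \cite{contmarmec}, and then building up to the unconditional and joint statements.

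\textbf{Claim 2 (Unconditional convergence).} Here the strategy is to de-condition the result of Claim 1. Claim 1 gives weak convergence of $W_n^H(\cdot)$ in $D[-\infty,\infty]$ to a Gaussian process with mean zero and covariance kernel $C_1(y,t)$ that is itself a deterministic function (depending on $F$, $G$, and the superpopulation expectations $K_l$), for $\MP$-almost all realizations of the $y_i$s and $t_{ij}$s. Since the limiting law does not depend on the particular realization (almost surely), the conditional laws converge to a fixed limit law; the unconditional law is the $\MP$-mixture of the conditional laws, and a mixture of (eventually) identical limits is that same limit. Formally, I would show convergence of the conditional characteristic functionals (or finite-dimensional characteristic functions together with tightness) pointwise a.s., note they are bounded, and apply dominated convergence to pass to the unconditional characteristic functional; tightness transfers because the modulus-of-continuity bounds used in Claim 1 can be taken uniform in the conditioning (or at least hold with inner probability tending to one). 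This gives unconditional weak convergence to $W_1$.

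\textbf{Claim 3 (Asymptotic independence and the joint limit).} This is the heart of the proposition. The decomposition $(\ref{Proc_dec})$ writes $W^H = W_n^H + \sqrt{n/N}\, W_N$, where $W_N$ is the classical empirical process of the i.i.d. sequence $Y_1,\dots,Y_N$, which by Donsker's theorem converges weakly in $D[-\infty,\infty]$ to an $F$-Brownian bridge with covariance $C_2(y,t)=F(y\wedge t)-F(y)F(t)$. The key point is joint weak convergence of the pair $(W_n^H, W_N)$ to a pair of \emph{independent} Gaussian processes. To get this I would argue conditionally: given $\Y$ and $\T$, the process $W_N$ is completely determined (it is a fixed càdlàg function, not random), while $W_n^H$ still converges weakly (Claim 1) to a Gaussian limit whose law does not depend on the conditioning values. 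Hence, conditionally, the pair converges to (deterministic constant, Gaussian); integrating out the conditioning, the randomness in the first coordinate comes purely from the superpopulation while the randomness in the second comes purely from the design, and the two are by construction independent in the limit. Making this rigorous requires a joint tightness argument for $(W_n^H, W_N)$ in $D[-\infty,\infty]^2$ (each marginal is tight, so the pair is tight) and identification of all finite-dimensional limit distributions: for fixed $y_1,\dots,y_k$, one shows $\E[\exp\{i\sum a_r W_n^H(y_r) + i\sum b_r W_N(y_r)\}] = \E_\MP\big[\exp\{i\sum b_r W_N(y_r)\}\, \E[\exp\{i\sum a_r W_n^H(y_r)\}\mid \Y,\T]\big]$, and the inner conditional characteristic function converges a.s. to the (constant) Gaussian characteristic function $\exp\{-\tfrac12 \sum_{r,s} a_r a_s C_1(y_r,y_s)\}$, so by dominated convergence the joint characteristic function factorizes into the product of the two Gaussian characteristic functions. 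Finally, since $\sqrt{n/N}\to\sqrt{f}$ by H4, the continuous mapping theorem applied to $(u,v)\mapsto u+\sqrt{f}\,v$ yields that $W^H$ converges weakly to $W_1 + \sqrt{f}\, W_2$ with $W_2$ an independent $F$-Brownian bridge, whose covariance kernel is $C_1(y,t) + f\,C_2(y,t)$, establishing $(\ref{Kernel})$.

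\textbf{Main obstacle.} The delicate step is justifying the interchange of limit and conditional expectation in the de-conditioning argument, i.e., turning almost-sure conditional weak convergence into unconditional (and joint) weak convergence. One must ensure the conditional convergence in Claim 1 is genuinely ``to a nonrandom limit law'' and that the convergence of conditional characteristic functionals holds on a single $\MP$-probability-one event simultaneously for a convergence-determining class of functionals; combined with a tightness bound that holds uniformly enough in the conditioning (controlling the modulus of continuity of $W_n^H$ via maximal inequalities for the Hájek estimator under high-entropy designs, which is where H5 enters), dominated convergence then applies. The classical-empirical-process part ($W_N$, Donsker) is standard and poses no difficulty; likewise the covariance computation $(\ref{Kernel})$ is routine once independence is in hand, since cross-terms vanish in the limit.
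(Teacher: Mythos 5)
Your proposal is correct and follows essentially the same route as the paper: Claim 1 is quoted from the cited reference, Claim 2 is obtained by de-conditioning the almost-sure conditional convergence (the paper delegates this to Lemmas 1.1--1.2 of Cs\"org\H{o} and Rosalsky, which formalize exactly the dominated-convergence argument you sketch), and Claim 3 is proved by conditioning on $\Y,\T$ so that $W_N$ becomes deterministic, factorizing the finite-dimensional joint distributions, and invoking Donsker plus the continuous mapping theorem with $(u,v)\mapsto u+\sqrt{f}\,v$. The only cosmetic difference is that you work with characteristic functions where the paper uses indicators of events $\{W_n^H(y)\le z\}$; both hinge on the same interchange of limit and $\E_\MP$ justified by boundedness and the fact that the conditional limit law is nonrandom.
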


We stress here that working conditionally on $y_i$s, $t_{ij}$s, is equivalent to consider the population is {\em fixed} (even if with increasing size), although generated by a superpopulation model. Hence, Claim 1 of Proposition $\ref{main}$ essentially refer to descriptive inference.
By the decomposition \eqref{Proc_dec}, it is clear that \rm{Claim} $1$ takes into consideration the contribution of the sampling design to the limit distribution of the process \eqref{Proc}, while \rm{Claim} $2$ takes into account the contribution of the superpopulation model to the limit distriution of the whole process \eqref{Proc} that is stated in \rm{Claim} 3.

A special case on which it is worth to focus is when the character of interest $Y$ and the design variable $T_j$s are independent,
that is essentially the case studied in  \cite{r6}.
In this case, the covariance kernel
(\ref{C1}) reduces to:
\begin{eqnarray}
C_1(y,t)=f(A-1)(F(y\wedge t)-F(y)F(t)) \nonumber
\end{eqnarray}
where
\begin{equation}\label{A}
A=\dfrac{\E_\MP[X_1]}{f}\E_\MP[X_1^{-1}]
\end{equation}
is, by the strong law of large numbers, the almost sure limit of
\begin{equation*}
\frac{1}{N}\sum_{i=1}^N \frac{1}{\pi_i}.
\end{equation*}
The following corollary sums up this result.
\begin{cor}\label{indep}
Under the hypothesis $H1-H6$, if $Y$ and $T_j$s are independent, the sequence $(W^H(y), \ y \in \mathbb{R})$ converges weakly, in $D[-\infty,\infty]$ equipped with the Skorokhod topology, to a Gaussian process with zero mean function and covariance kernel
\begin{eqnarray}
C(y,t)=fA(F(y\wedge t)-F(y)F(t)) , \nonumber
\end{eqnarray}
with $A$ given by $(\ref{A})$.
\end{cor}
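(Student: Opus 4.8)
The plan is to obtain Corollary~\ref{indep} as an immediate specialization of Claim~3 of Proposition~\ref{main}. That claim already yields weak convergence of $(W^H(y),\ y\in\mathbb{R})$ in $D[-\infty,\infty]$ with the Skorokhod topology to a zero-mean Gaussian process with covariance kernel $C=C_1+fC_2$, and a Gaussian process is determined by its mean and covariance functions; hence no new limit theorem is needed, and it suffices to re-evaluate the kernel under the additional independence assumption.

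First I would note that $X=g(T_1,\ldots,T_L)$ is a measurable function of the design variables, so that independence of $Y$ and $(T_1,\ldots,T_L)$ under $\MP$ forces independence of $Y$ and $X$. Therefore, for every $y$ with $F(y)>0$ and every integer $l$,
$$K_l(y)=\E_\MP[X_1^l \mid Y_1\le y]=\E_\MP[X_1^l],$$
a quantity not depending on $y$; the behaviour of the kernel on $\{y:\ F(y)=0\}$ is immaterial, since there $F(y)=F(y\wedge t)=0$ annihilates every summand in~(\ref{C1}). In particular $K_{-1}(\cdot)\equiv\E_\MP[X_1^{-1}]$ and $K_{+1}(\cdot)\equiv\E_\MP[X_1]$ wherever it matters.

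Next I would substitute these constants into~(\ref{C1}). The middle (rank-one) summand carries the factor $\bigl(1-K_{+1}(y)/\E_\MP[X_1]\bigr)\bigl(1-K_{+1}(t)/\E_\MP[X_1]\bigr)$, which is identically zero, so that term drops out (regardless of the value of $f^3/d$, which is finite by~(\ref{L1})). In the remaining two summands $\frac{\E_\MP[X_1]}{f}K_{-1}(\cdot)$ collapses to the constant $A=\frac{\E_\MP[X_1]}{f}\E_\MP[X_1^{-1}]$, and collecting the $F(y\wedge t)$ and the $F(y)F(t)$ contributions gives
$$C_1(y,t)=f(A-1)\bigl(F(y\wedge t)-F(y)F(t)\bigr).$$
Adding $fC_2(y,t)=f\bigl(F(y\wedge t)-F(y)F(t)\bigr)$ as prescribed by~(\ref{Kernel}) yields $C(y,t)=fA\bigl(F(y\wedge t)-F(y)F(t)\bigr)$, the asserted kernel; the identification of $A$ with the $\MP$-almost sure limit of $N^{-1}\sum_{i=1}^N\pi_i^{-1}$ is the strong law of large numbers applied under~H3, exactly as recorded just before the statement.

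I do not expect any genuinely hard step here: the whole argument is a bookkeeping reduction of~(\ref{C1}), the only points requiring a little care being the identity $K_l(y)=\E_\MP[X_1^l]$ under independence and the remark that the covariance kernel need only be pinned down on $\{F(y)>0,\ F(t)>0\}$, so that no degenerate conditioning arises. Should one wish to avoid quoting~(\ref{C1}) altogether, an alternative is to recompute $C_1$ directly: under~H5 the sampling covariance of $W_n^H$ coincides with the one for rejective sampling, and with $\pi_i\approx fX_i/\E_\MP[X_1]$ the H\'{a}jek ratio correction produces $f(A-1)(F(y\wedge t)-F(y)F(t))$ on its own — but invoking Proposition~\ref{main} is plainly the shortest route.
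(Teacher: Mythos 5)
Your proposal is correct and is essentially the paper's own argument: the corollary is obtained by substituting $K_{\pm 1}(\cdot)\equiv\E_{\MP}[X_1^{\pm 1}]$ (valid under independence of $Y$ and $X=g(T_1,\dots,T_L)$) into the kernel $(\ref{C1})$ from Claim 3 of Proposition \ref{main}, which kills the rank-one middle term and collapses the rest to $f(A-1)(F(y\wedge t)-F(y)F(t))$, after which one adds $fC_2$. The only caveat is typographical rather than mathematical: for the $F(y)F(t)$ contributions to collect into $-f(A-1)F(y)F(t)$, the trailing $-1$ in the third summand of $(\ref{C1})$ must be read outside the factor $\E_{\MP}[X_1]/f$ (otherwise one gets $-(fA-\E_{\MP}[X_1])F(y)F(t)$, which agrees only if $\E_{\MP}[X_1]=f$); this is evidently the intended parenthesization, since it is exactly the reduction the paper itself records just before the corollary.
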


The limiting process of Corollary $\ref{indep}$ is proportional to a Brownian bridge on the scale of $F$, which is
the usual limiting process of the empirical process in classic setting of $i.i.d$ data. The proportionality constant takes into account the finite population setting (the sampling fractions appears in the expression of the proportionality constant) and also  the dependence relationship between units due to the sampling design (the term $A$).

Another case of interest is when the sampling design is a simple random sampling (srs).
As shown in Corollary $\ref{Cor2}$, in this case the role of the sampling design is asymptotically negligible, and the unit in the sample can be seen as independently
selected by the superpopulation.
The following result formalizes this idea.
\begin{cor}\label{Cor2}
Under the hypothesis $H1-H6$, if the sampling design $P$ is a simple random sampling, the sequence $(W^H(y), \ y \in \mathbb{R})$ converges weakly, in $D[-\infty,\infty]$ equipped with the Skorokhod topology, to a Gaussian process with zero mean function and covariance kernel
\begin{eqnarray}
C(y,t)=(F(y\wedge t)-F(y)F(t)) . \nonumber
\end{eqnarray}
\end{cor}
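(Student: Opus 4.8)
The plan is to recognise Corollary \ref{Cor2} as the degenerate instance of Corollary \ref{indep} in which the sampling mechanism ignores the design variables, and to give the short self-contained argument suggested by the discussion preceding the statement. First I would record that under simple random sampling of size $n$ one has $\pi_i\equiv n/N$, so that the denominator of \eqref{Hajek} equals $N$ and the H\'{a}jek estimator collapses to the unweighted sample empirical distribution function,
\[
\hat F_H(y)=\frac{\sum_{i=1}^N (N/n)\,D_i I_{(y_i\le y)}}{\sum_{i=1}^N (N/n)\,D_i}=\frac{1}{n}\sum_{i\in s} I_{(y_i\le y)} .
\]
Next I would observe that, under the joint randomness of H2--H3, the sampled values $\{Y_i:\ i\in s\}$ are a fixed-size subset of the i.i.d.\ superpopulation sequence drawn independently of it, hence are themselves i.i.d.\ with common law $F$; consequently the process $W^H(\cdot)=\sqrt n(\hat F_H-F)$ has, as a random element of $D[-\infty,\infty]$, exactly the law of the classical empirical process built on $n$ i.i.d.\ observations from $F$. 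Since $n\to\infty$ (because $n/N\to f>0$), Donsker's theorem then gives the weak convergence in $D[-\infty,\infty]$, equipped with the Skorokhod topology, to the Gaussian process $(B(F(y)),\ y\in\mathbb{R})$ with $B$ a standard Brownian bridge on $[0,1]$; its covariance at $(y,t)$ is $F(y)\wedge F(t)-F(y)F(t)=F(y\wedge t)-F(y)F(t)$ by the monotonicity of $F$, which is the asserted kernel.

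As a cross-check staying inside the paper's own machinery, I would also note that simple random sampling is the case of H3 in which $X=g(T_1,\dots,T_L)$ is (a.s.) constant, so $K_l(y)\equiv\E_\MP[X_1^l]$ for every $y$ --- the only property of the independence hypothesis of Corollary \ref{indep} used in reducing the kernel \eqref{C1}. Here H1, H2, H4 are untouched, H6 is immediate since $X$ is constant, and H5 holds with $d_H(P,R)=0$ because simple random sampling without replacement is exactly rejective sampling with $p_i=n/N$; in \eqref{L1} one has $\pi_i(1-\pi_i)=(n/N)(1-n/N)\to f(1-f)=:d\in(0,\infty)$, the value of $d$ being anyway irrelevant since the only place it occurs in \eqref{C1} it multiplies a factor that vanishes identically when $X$ is constant. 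Finally, as recalled after \eqref{A}, the constant $A$ is the almost sure limit of $N^{-1}\sum_{i=1}^N\pi_i^{-1}$, which here equals $\lim_{N\to\infty}N/n=1/f$; substituting $A=1/f$ in the kernel of Corollary \ref{indep} yields $C(y,t)=fA\bigl(F(y\wedge t)-F(y)F(t)\bigr)=F(y\wedge t)-F(y)F(t)$, in agreement with the first argument.

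I do not anticipate a genuine obstacle: the result is a degenerate specialization of what has already been established. The only steps needing a little care are the exchangeability argument that turns a simple random sample of i.i.d.\ data back into an honest size-$n$ i.i.d.\ sample --- so that Donsker applies verbatim under the combined randomness --- and, in the second route, checking that simple random sampling really does fit H3 in the asymptotic reading of \eqref{L1} while $d$ drops out of the limiting covariance. Everything else is bookkeeping.
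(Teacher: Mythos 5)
Your proposal is correct. The paper gives no standalone proof of Corollary \ref{Cor2}: it is obtained by specializing the kernel of Proposition \ref{main} (equivalently Corollary \ref{indep}) to the case where $X$ is constant, so that $K_{\pm 1}(y)\equiv \E_{\MP}[X_1^{\pm 1}]$, the term in \eqref{C1} carrying the factor $f^3/d$ vanishes identically, $A=\E_{\MP}[X_1]\E_{\MP}[X_1^{-1}]/f=1/f$, and $fA=1$ --- which is exactly your second route, including the observations that SRSWOR is rejective sampling with equal $p_i$ (so H5 holds trivially) and that $d$ drops out of the limit. Your first route --- noting that under SRS the H\'{a}jek estimator \eqref{Hajek} collapses to the ordinary e.d.f.\ of the sample, that a without-replacement sample of fixed size drawn independently of an i.i.d.\ sequence is again an i.i.d.\ sample from $F$, and then invoking Donsker directly --- is a genuinely different and more elementary argument. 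The paper only gestures at this reading in the surrounding discussion (``the unit in the sample can be seen as independently selected by the superpopulation''); making it the proof bypasses the high-entropy machinery of Proposition \ref{main} entirely and gives an exact (not merely asymptotic) identification of the law of $W^H$ with the classical empirical process, at the cost of working only for this one design. Either route suffices; having both is a useful consistency check on the kernel \eqref{Kernel}.
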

It is easy to see that the H\'{a}jek  estimator (\ref{Hajek}) under a srs design coincides with the empirical distribution function of the sample.
 Hence, Corollary \ref{Cor2} states that the asymptotic law of the process $W^H$ under the srs is exactly a Brownian bridge as in the case of classical empirical processes under the $i.i.d$ data assumptions.

Next result, that will be used in Section $\ref{s5}$, is a Glivenko-Cantelli type result establishing the uniform convergence of $\widehat{F}_H$ to $F$.
\begin{prop}
\label{glivcant}
Under the hypotheses $H1-H6$, we have:
\begin{eqnarray}
\sup_{y} \left \vert \widehat{F}_H - F(y)  \right \vert \rightarrow 0 \;\; {\mathrm{as}} \; N \rightarrow \infty
\label{glivcantelli}
\end{eqnarray}
for a set of (sequences of) $Y_i$s, $T_{ij}$s having $\MP$-probability $1$, and for a set of $\D$s of $P$-probability tending to $1$ as $N$ increases.
\end{prop}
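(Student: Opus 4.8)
The plan is to split the uniform error into a sampling-design part and a superpopulation part, using the same decomposition as in \eqref{Proc_dec}:
\[
\sup_{y}\bigl\vert \widehat{F}_H(y)-F(y)\bigr\vert \;\leq\; \sup_{y}\bigl\vert \widehat{F}_H(y)-F_N(y)\bigr\vert \;+\; \sup_{y}\bigl\vert F_N(y)-F(y)\bigr\vert .
\]
The second summand is a statement purely about the data generating process: under H2 the population values $(Y_i,T_{i1},\dots,T_{iL})$, $i\ge 1$, are i.i.d., so $F_N$ is the ordinary empirical distribution function of an i.i.d.\ sample from $F$, and the classical Glivenko--Cantelli theorem gives $\sup_y\vert F_N(y)-F(y)\vert\to 0$ for a set of sequences $(Y_i,T_{ij})$ of $\MP$-probability $1$, with no reference to the sampling mechanism.

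For the first summand I would invoke Claim~1 of Proposition~\ref{main}. Conditionally on the $Y_i$s and $T_{ij}$s, and for almost all such sequences, the process $W_n^H(\cdot)=\sqrt{n}(\widehat{F}_H(\cdot)-F_N(\cdot))$ converges weakly in $D[-\infty,\infty]$, equipped with the Skorokhod topology, to the Gaussian process $\widetilde{W}_1$. Since the sup-norm functional $x\mapsto\sup_y\vert x(y)\vert$ is continuous with respect to the Skorokhod metric (a uniformly small time change leaves the supremum unchanged), the continuous mapping theorem yields $\sup_y\vert W_n^H(y)\vert\Rightarrow\sup_y\vert\widetilde{W}_1(y)\vert$, an almost surely finite limit; equivalently, conditional tightness of $(W_n^H)$ already gives $\sup_y\vert W_n^H(y)\vert=O_P(1)$. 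Dividing by $\sqrt{n}\to\infty$ (H4), it follows that $\sup_y\vert\widehat{F}_H(y)-F_N(y)\vert=\sup_y\vert W_n^H(y)\vert/\sqrt{n}\to 0$ in conditional $P$-probability, i.e.\ on a set of $\D$s of $P$-probability tending to $1$. Combining the two bounds, and intersecting the $\MP$-probability-one set of population sequences on which the first argument also applies, gives the claim.

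The step that needs the most care is the bookkeeping of the two sources of randomness: the term $F_N-F$ vanishes on a fixed set of populations of $\MP$-probability one, whereas $\widehat{F}_H-F_N$ vanishes only \emph{in $P$-probability} along (almost) every such population sequence, which is exactly why the conclusion must be phrased as ``for $\MP$-almost all populations, and for a set of samples of $P$-probability tending to $1$.'' As an alternative, self-contained route that avoids Proposition~\ref{main}, one may write $\widehat{F}_H(y)=\widehat{T}_{HT}(y)/\widehat{N}_{HT}$ with $\widehat{T}_{HT}(y)=\sum_i\pi_i^{-1}D_iI_{(y_i\le y)}$ and $\widehat{N}_{HT}=\sum_i\pi_i^{-1}D_i$, note $\E_P[N^{-1}\widehat{T}_{HT}(y)]=F_N(y)$ and $\E_P[N^{-1}\widehat{N}_{HT}]=1$, and use the high-entropy assumption H5 together with H3--H4 to bound the design variances of $N^{-1}\widehat{T}_{HT}(y)$ and $N^{-1}\widehat{N}_{HT}$ by a term of order $n^{-1}$ via the H\'{a}jek variance approximation; Chebyshev then gives pointwise convergence in $P$-probability, uniformity follows from the monotonicity of $y\mapsto N^{-1}\widehat{T}_{HT}(y)$ and $y\mapsto F_N(y)$ plus a finite-grid argument controlling also the left limits at the relevant jump points of $F$, and the ratio is handled by Slutsky. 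In that route the obstacle is precisely the uniform control of the denominator and the grid argument at discontinuities of $F$, both of which the argument via Proposition~\ref{main} sidesteps.
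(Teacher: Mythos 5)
Your proof is correct and follows essentially the same route as the paper: the same decomposition $\sup_y|\widehat F_H-F|\le\sup_y|\widehat F_H-F_N|+\sup_y|F_N-F|$, with the classical Glivenko--Cantelli theorem handling the second term and the weak convergence of $W_n^H=\sqrt n(\widehat F_H-F_N)$ from Proposition~\ref{main} handling the first (the paper extracts $\sup_y|W_n^H|=O_P(1)$ via the Skorokhod representation theorem where you use the continuous mapping theorem, a negligible difference).
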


\begin{rem}
\label{rem_edf}
{\rm{Even if in the superpopulation model the r.v.s $Y_i$s are $i.i.d.$, the sampling design makes it inconsistent the common empirical distribution function (e.d.f.):
\begin{eqnarray}
\widehat{F}_n (y) = \frac{1}{n} \sum_{i=1}^{N} D_i I_{(y_i \leq y)} . \label{eq:edf_nonnorm}
\end{eqnarray}
In fact, it is not difficult to see that:
\begin{eqnarray}
\E_{\MP, P} \left [ \widehat{F}_n (y) \right ] & = & \frac{1}{n} \sum_{i=1}^{N} \E_{\MP} \left [
\pi_i I_{(y_i \leq y)}
\right ] \nonumber \\
\, & \rightarrow & \E_{\MP} \left [X I_{(Y \leq y)} \right ] /  \E_{\MP} \left [X \right ]  \neq F(y) \label{eq_edf_asint_bias}
\end{eqnarray}
\noindent as $N$ increases. Relationship $( \ref{eq_edf_asint_bias} )$ shows that the e.d.f. $( \ref{eq:edf_nonnorm} )$ is asymptotically
biased, and hence inconsistent.

The above result can be slightly refined. Using the same approach as in Lemma $\ref{lemma1}$, it is not difficult to show that, as $N$ increases,
\begin{eqnarray}
\widehat{F}_n (y) \rightarrow \E_{\MP} \left [X I_{(Y \leq y)} \right ] /  \E_{\MP} \left [X \right ]  \neq F(y) \nonumber
\end{eqnarray}
for a set of (sequences of) $y_i$s, $t_{ij}$s having $\MP$-probability $1$, and for a set of $\D$s of $P$-probability tending to $1$.
This makes it stronger the assertion about the inconsistency of $\widehat{F}_n (y) $, because it shows that such an inconsistency
is due to the sampling design.}}
\end{rem}

\section{Regularity assumptions to estimate hyperparameters}\label{s4}
As already said in Section \ref{s2}, we focus on hyperparameters ({\em i.e.} superpopulation parameters) that can be
expressed as functional of the d.f. $F$ of the character of interest $Y$.
The aim of this section is to introduce the proper regularity condition and to study the large sample distribution of estimators of superpopulation parameters.\\
The sought condition is the \textit{Hadamard}-differentiability, which is weaker than  \textit{Frechét} differentiability. In fact some well-known statistical functionals, like variance and quantiles (see \cite{r20}, p. 220, and \cite{r14}), do not satisfy the usual
\textit{Frechét} differentiability assumption.\\
Let $\theta(\cdot):l^\infty(-\infty,\infty)\to E$ be a map having as domain the Banach space (equipped with the sup-norm) of the bounded functions, and taking values on a normed space $E$ with norm $\Vert\cdot \Vert_E$. The map $\theta(\cdot)$ is \textit{Hadamard-differentiable} at $F$ if there exist a continuous linear functional $\theta'_F(\cdot):l^\infty(-\infty,\infty)\to E$ such that
\begin{equation}
\left\Vert \dfrac{\theta(F+th_t)-\theta(F)}{t}-\theta'_F(h)\right\Vert_E\rightarrow 0,\text{ as } \ t\downarrow0, \ \forall h_t\rightarrow h.
\end{equation}
The map $\theta'_F(\cdot)$ is the \textit{Hadamard derivative} of $\theta$ at $F$.

As a consequence of Theorem 20.8 (p. 297) in \cite{r9} and Proposition \ref{main} the following result holds true.
\begin{prop}\label{Had}
Suppose that $\theta(\cdot)$ is (continuously) Hadamard-differentiable at $F$, with Hadamard derivative $\theta'_F(\cdot)$. Assuming $H1-H6$, the sequence
$(\sqrt{n}(\theta(\hat{F}_H)-\theta(F)),\ y\in \mathbb{R})$ converges weakly to $\theta'_F(W)$, almost surely w.r.t. $\MP$, as $N$ increases.
\end{prop}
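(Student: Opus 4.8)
The statement is an instance of the functional delta method, and the plan is simply to cast it as such and verify the hypotheses. I would take the Banach space $\mathbb{D}=\ell^\infty(-\infty,\infty)$ (equivalently, work in $D[-\infty,\infty]$ with the Skorokhod topology, as in Proposition~\ref{main}), the map $\phi=\theta(\cdot):\mathbb{D}\to E$, the centering point $F$, the rate $r_N=\sqrt{n}$, and the random elements $X_N=\widehat{F}_H$. Then $r_N(X_N-F)=W^H$, and Proposition~\ref{main}, Claim~3, supplies $W^H\rightsquigarrow W$, with $W$ a zero-mean Gaussian process having covariance kernel $C(y,t)=C_1(y,t)+fC_2(y,t)$. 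Feeding this into Theorem~20.8 (p.~297) of \cite{r9} yields $\sqrt{n}(\theta(\widehat{F}_H)-\theta(F))\rightsquigarrow\theta'_F(W)$, which is the assertion; the ``$\MP$-almost sure'' mode of convergence is inherited from the way Proposition~\ref{main} is established (its Claim~1 holds for $\MP$-almost all sequences of $Y_i$s, $T_{ij}$s) and can be made precise through an almost-sure Skorokhod representation of $W^H\to W$.

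In detail I would proceed as follows. (1) Record that $\theta(\cdot)$ is, by hypothesis, (continuously) Hadamard-differentiable at $F$, necessarily tangentially to some subspace $\mathbb{D}_0\subseteq\mathbb{D}$ on which $\theta'_F$ is a continuous linear map; for the functionals treated in Section~\ref{s6} (mean, variance, quantiles) $\mathbb{D}_0$ is the set of functions continuous at the relevant points of $F$, and for quantiles the positivity of the density of $F$ at the quantile enters here. (2) Check that the limit $W$ of Proposition~\ref{main} is a tight, Borel-measurable Gaussian random element of $D[-\infty,\infty]$ concentrated on a separable subset of $\mathbb{D}_0$: its kernel $C(y,t)$ is a bounded function built from $F$ and the conditional moments $K_l$, so $W$ admits a version with c\`adl\`ag paths that, with $\MP$-probability one, are continuous at every continuity point of $F$; hence $W$ takes values in $\mathbb{D}_0$ almost surely, and since the Skorokhod and uniform topologies agree on such sets, the hypotheses of Theorem~20.8 of \cite{r9} hold in the required form. (3) Apply that theorem to obtain $\sqrt{n}(\theta(\widehat{F}_H)-\theta(F))\rightsquigarrow\theta'_F(W)$. (4) When, as in the applications, the derivative $\theta'_F$ extends to a continuous linear functional on all of $\mathbb{D}$, the second part of Theorem~20.8 additionally yields the first-order expansion $\sqrt{n}(\theta(\widehat{F}_H)-\theta(F))=\theta'_F(W^H)+o_P(1)$, a representation that is convenient for the resampling analysis of Sections~\ref{s5}--\ref{s5b}.

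The delta-method bookkeeping and the identification of the limit with $\theta'_F(W)$ are then immediate. The genuine work --- and the step I expect to be the main obstacle --- is (2): one has to pin down, for each intended application, the precise tangency subspace $\mathbb{D}_0$ and then verify $\MP$-almost surely that $W$ is supported on it, which ties the applicability of the result to the sample-path regularity of $W$ (hence to continuity properties of $F$) and, through the interplay between the $D[-\infty,\infty]$--Skorokhod convergence delivered by Proposition~\ref{main} and the $\ell^\infty$--uniform formulation of the delta method, to a modest amount of topological care. Reconciling the conditional (descriptive, Claim~1) and unconditional (analytic, Claim~3) viewpoints, so that the ``$\MP$-almost sure'' qualifier is both meaningful and correctly placed, is the remaining point that deserves attention.
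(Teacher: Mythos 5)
Your proposal is correct and follows essentially the same route as the paper, which likewise derives Proposition~\ref{Had} directly as a consequence of Theorem~20.8 of \cite{r9} (the functional delta method) applied to the weak convergence $W^H\rightsquigarrow W$ of Proposition~\ref{main}. The additional care you devote to the tangent space, measurability, and the placement of the $\MP$-almost-sure qualifier goes beyond what the paper records (it offers no further argument), but it does not change the approach.
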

It is worth to analyse some consequences of Proposition \ref{Had}.
If $\theta(\cdot)$ takes value on the real line, the limiting random variable $\theta'_F(W)$ is Gaussian and centered; in fact the linearity of the Hadamard derivative preserve both normality and the zero mean function.
Thus, the variance of $\theta'_F(W)$ is equal to
\begin{equation}
\sigma^2_{\theta}=\E[\theta'_F(W)^2]
\end{equation}

\section{Resampling procedure: theoretical properties}\label{s5}
Computing the asymptotic distribution of the H\'{a}jek  estimator of
hyperparameters of interest requires the knowledge of the explicit form of the Hadamard-derivative of the functional. Sometimes this derivative is hard to compute, so the goal of this section is to provide a resampling procedure that allow us to recover the asymptotic distribution of the H\'{a}jek  estimator  avoiding the explicit computation of the Hadamard derivative of the functional under examination.\\
After defining the resampling procedure, we also provide a full asymptotic justification. The idea is similar to what proved for the classical bootstrap by
 \cite{r4}: the resampled process converges to the same limit of the original process.\\
A first attempt to define a resampling procedure justified by asymptotic considerations in finite populations framework is in \cite{r5}, and in \cite{contmarmec}.
In the present paper, there are several fundamental differences. First of all, both the above mentioned papers focus on descriptive inference, so that the involved asymptotic distributions are different. In the second place, in \cite{r5} there is asymptotically no relationship between the design variables and the variable of interest. The possible existence
of such a relationship is taken into account in \cite{contmarmec}, but, due to the descriptive inference framework, the class of resampling procedures defined in that paper do not work in
the present case, except the noticeable exception of the ``multinomial scheme'' described below.

The resampling procedure considered in the present paper in composed by two phases. In the first phase, a prediction of the population is generated on the basis of the sample. In the second phase, a new sample, of the same size of the original one, is selected according to a sampling design $P^*$ that fulfills the high entropy requirement. The inclusion probabilities are chosen proportional to the size variable $X$ of the predicted population constructed in Phase 1.
\begin{itemize}
\item[Phase 1.]
\medskip
\begin{itemize}
\item[1.] Sample $N$ units independently from the distribution $\hat{F}_H$, such that each unit $i \in s$ is selected with probability $\pi_i^{-1}/\sum_{j\in s}\pi_j^{-1}$ $=$ $\pi_i^{-1}/\sum_{j=1}^{N} D_j \pi_j^{-1}$
\item[2.] For $k=1,2,\ldots,N$, if the $k-th$ sampled unit is unit $i\in s$, take $y_k^*=y_i$ and $x_k^*=x_i$.
\item[3.] Define a predicted population of $N$ units $\U^*$, such that unit $k$ possesses $y$-value $y_k^*$ and $x$-value $x^*_k$, $k=1,2,\ldots,N$.
\end{itemize}
\item[Phase 2.]
\begin{itemize}
\item[] Draw a sample $s^*$ of size $n$ from the population $\U^*$ defined in phase 1, using a high entropy sampling design $P^*$ with first order inclusion probabilities  $\pi_k^*=nx_k^*/ \sum_{j=1}^N x_k^*$.
    \end{itemize}
\end{itemize}
Note that the sampling design $P^*$ used in Phase 2 does not necessarily coincide with the sampling design $P$ used to select the sample $s$ from $\U$, but the resampling inclusion probabilities $\pi_i^*$s have the same structure of the original ones.

This resampling scheme was first considered in \cite{pfefsver04}, in a different framework.
In principle, it is based on a simple idea: Phase 1 mimics the generation process of the finite population from the superpopulation, and
Phase 2 mimics the selection of the sample
from the finite population.
This is sketched in the scheme below.\\

\begin{figure}[H]
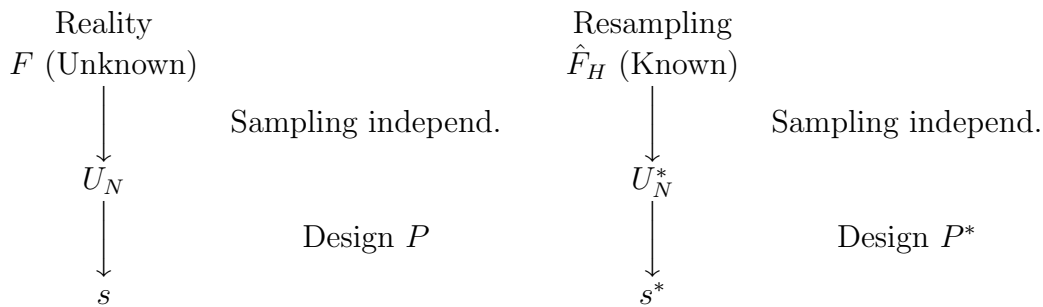

\centering\parbox{\linewidth}{

\begin{tabular}{cccccc}
&Reality & & & Resampling & \\
&$F$ (Unknown) &  &  & $\hat{F}_H$ (Known) &  \\
&$\bigg\downarrow$ & Sampling independ. &  & $\bigg\downarrow$ & {Sampling independ.} \\
&$U_N$ &  &  & $U_N^*$ &  \\
&$\bigg\downarrow$ & Design $P$ &  & $\bigg\downarrow$ & Design $P^*$ \\
&$s$ &  &  & $s^*$ &
\end{tabular}
\caption{Mimicking scheme}
}
\end{figure}
Define now $N_i^*$ as the number of the predicted population units equal to unit $i$ of the sample $s$, and
let $\MP^*$ be the probability distribution of the predicted population generating process. It is easy to see that, given $s$, $\Y$, $\T$, the r.v.s $(N_i^*,\ i \in s)$ possesses a multinomial distribution with:

\begin{align}
&\E_{\MP^*}[N_i^*\vert \D ,\Y , \T]=N\left( D_i\pi_i^{-1}/ \sum_{j=1}^N D_j\pi_j^{-1} \right)\\
&\mathbb{V}_{\MP^*}[N_i^*\vert \D ,\Y , \T]=N\left( D_i\pi_i^{-1}/ \sum_{j=1}^N D_j\pi_j^{-1} \right)\left(1- D_i\pi_i^{-1}/ \sum_{j=1}^N D_j\pi_j^{-1} \right)\\
&\mathbb{C}_{\MP^*}[N_i^*,N_j^*\vert \D ,\Y , \T ]=-ND_iD_j\pi_i^{-1}\pi_j^{-1}/\left( \sum_{k=1}^N D_k\pi_k^{-1} \right)^2, \ j\neq i
\end{align}
The d.f. of the predicted population can be written as:
\begin{equation}
F_N^*(y)=\frac{1}{N}\sum_{i=1}^NI_{(y_i^*\leq y)}=\sum_{i=1}^N D_i \frac{N_i^*}{N}I_{(y_i \leq y)}.
\end{equation}
Consider next the H\'{a}jek  estimator (based on resampled data) of $F_N^*$, which is equal to
\begin{equation}\label{ResampledHajek}
F_H^*(y)=\dfrac{\sum_{i=1}^N\frac{D_i^*}{\pi_i^*}I_{(y_i^*\leq y)}}{\sum_{i=1}^N\frac{D_i^*}{\pi_i^*}}.
\end{equation}

In the sequel, it is shown that the asymptotic distribution of the resampled process:
\begin{equation}\label{RP}
W^{H*}(y)=\sqrt{n}(\widehat{F}_H^*(y)-\widehat{F}_H(y)), \ y\in \mathbb{R}, \ N\geq 1.
\end{equation}
coincides with the asymptotic distribution of $W^{H}$ given in Proposition $\ref{main}$.

\begin{prop}\label{main*}
Suppose the sampling design $P$ and the resampling design $P^*$ both satisfy assumptions $H1-H6$.
The following claims hold.

\noindent {\rm{Claim}} $1$
Conditionally on $\Y, \T, \D$, $N^*_i$s,
the sequence $(W^{H*}_n(y)=\sqrt{n}(\hat{F}_H^*(y)-F_N^*(y)),\ y\in\mathbb{R}, \ N\geq 1)$ converges weakly, in $D[-\infty,\infty]$ equipped with the Skorokhod topology, to a Gaussian Process $\widetilde{W}_1^*$ with zero mean function and covariance function given by $(\ref{C1})$.
The convergence holds for almost all $y_i$s, $t_{ij}$s, for a set of $\D$s of $P$-probability tending to $1$, and for a set
of $N^*_i$s of $\MP^*$-probability tending to $1$.

\noindent {\rm{Claim}} $2$ Conditionally on $\Y, \T, \D$, the
sequence of random functions $(W^{H*}_n(y)=\sqrt{n}(\hat{F}_H^*(y)-F_N^*(y)),\ y\in\mathbb{R}, \ N\geq 1)$ converges weakly, in $D[-\infty,\infty]$ equipped with the Skorokhod topology, to a Gaussian Process $W_1^*$ with zero mean function and covariance function given by (\ref{C1}).
The convergence holds for almost all $y_i$s and $t_{ij}$s, and for a set of $\D$s of $P$-probability tending to $1$.

\noindent {\rm{Claim}} $3$
The two sequences $(W_n^{H*}(y),\ y \in \mathbb{R})$ and $(W_N^*(y),\ y \in \mathbb{R})$ are asymptotically independent.
Moreover, the following statements hold true.
\begin{itemize}
\item[R1] The whole process $(W^{H*}(y),\ y \in \mathbb{R})$ converges weakly in $D[-\infty,\infty]$ endowed with the Skorokhod topology, to a Gaussian process $W^*$ with zero mean function and covariance kernel given by $( \ref{Kernel} )$.
\item[R2] If $\theta(\cdot)$ is continuously Hadamard differentiable at $F$, then $(\sqrt{n}(\theta(\widehat{F}_H^*)-\theta(\hat{F}_H)), \ N \geq 1)$ converges weakly to $\theta'_F(W^*)$, as $N$ increases.
\end{itemize}
In both R1, R2 the convergence hold for almost all $y_i$s and $t_{ij}$s, and for a set of $\D$s of $P$-probability tending to $1$ and
$N$ increases.
\end{prop}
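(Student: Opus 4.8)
The plan is to mirror the proof of Proposition \ref{main}, exploiting the mimicking scheme: Phase~1 plays the role of superpopulation generation and Phase~2 that of sample selection, with the unknown $F$ replaced by the (known) H\'ajek estimator $\widehat{F}_H$ and the original population $\U$ replaced by the predicted population $\U^*$. Exactly as in \eqref{Proc_dec}, decompose
\[
W^{H*} = \sqrt{n}\bigl(\widehat{F}_H^*-F_N^*\bigr) + \sqrt{n/N}\,\sqrt{N}\bigl(F_N^*-\widehat{F}_H\bigr) = W_n^{H*} + \sqrt{n/N}\,W_N^*,
\]
where $W_n^{H*}$ carries the Phase-2 randomness and $W_N^*=\sqrt{N}(F_N^*-\widehat{F}_H)$ the Phase-1 randomness. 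Two structural observations drive everything. First, conditionally on $\U^*$ (that is, on $\Y,\T,\D,N_i^*$) the process $W_n^{H*}$ is \emph{literally} the sampling-randomness process of Proposition \ref{main}, Claim~1, but for the finite population $\U^*$ under the high-entropy design $P^*$. Second, conditionally on $s$, the pairs $(y_k^*,x_k^*)$, $k=1,\dots,N$, are i.i.d. draws from the bivariate H\'ajek estimator $\widehat{H}_H(x,y)=\sum_{i\in s}\pi_i^{-1}I_{(x_i\le x,\,y_i\le y)}/\sum_{i\in s}\pi_i^{-1}$, so that $W_N^*$ is an ordinary empirical process built from i.i.d. data whose common law converges to $H$.

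For Claim~1 I would check that $\U^*$ inherits, asymptotically and along the relevant almost-sure / high-probability sets, the hypotheses needed to invoke Claim~1 of Proposition \ref{main}. By Proposition \ref{glivcant} and its bivariate analogue (same argument), $\widehat{F}_H\to F$ and $\widehat{H}_H\to H$ uniformly for a.a. $y_i,t_{ij}$ and a set of $\D$ of $P$-probability tending to $1$. Since, given $s$, the $(y_k^*,x_k^*)$ are i.i.d. from $\widehat{H}_H$, the strong law of large numbers---with a uniform-integrability step to replace $\widehat{H}_H$ by its limit $H$, using H6 and the finiteness of the $K_{-1}$-type moments in \eqref{C1}---yields, for a set of $N_i^*$ of $\MP^*$-probability tending to $1$, that all population functionals of $\U^*$ converge to the same limits as for the original superpopulation: $F_N^*\to F$, $\tfrac1N\sum_k x_k^*\to\E_\MP[X_1]$, $\tfrac1N\sum_k\pi_k^{*-1}\to A$, $\tfrac1N\sum_k\pi_k^*(1-\pi_k^*)\to d$, and the conditional moments converge to $K_l(\cdot)$. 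Feeding these limits into Claim~1 of Proposition \ref{main} (applied with $\U^*,P^*$ in place of $\U,P$; $P^*$ is high entropy with $\pi$ps inclusion probabilities and $n/N\to f$ by hypothesis) gives weak convergence of $W_n^{H*}$ to a centred Gaussian process whose kernel is \eqref{C1} evaluated at the above limits, i.e. exactly $C_1$; this is the stated $\widetilde{W}_1^*$.

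Claim~2 follows because the laws of $W_n^{H*}$ conditional on $(\Y,\T,\D,N_i^*)$ converge, for a set of $N_i^*$ of $\MP^*$-probability tending to $1$, to a limit $W_1^*$ that does not depend on the $N_i^*$; a routine mixture (Slutsky-type) argument then transfers the convergence to the law conditional on $(\Y,\T,\D)$ only. For the asymptotic independence in Claim~3: since the conditional law of $W_n^{H*}$ given $\U^*$ converges to a fixed ($\U^*$-free) limit, $W_n^{H*}$ is asymptotically independent of $\U^*$, hence of $W_N^*$, which is $\sigma(\U^*,\D,\Y,\T)$-measurable; the standard lemma ``$X_N\mid Z_N\Rightarrow\mu$ with $\mu$ deterministic and $Z_N\Rightarrow\nu$ imply $(X_N,Z_N)\Rightarrow\mu\otimes\nu$'' then gives joint weak convergence. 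Conditionally on $s$, $W_N^*$ is the empirical process of $N$ i.i.d. draws from $\widehat{F}_H$, and since $\widehat{F}_H\to F$ uniformly, a Donsker-type theorem for row-wise i.i.d. triangular arrays over the fixed Donsker class of half-line indicators gives $W_N^*\Rightarrow W_2^*$, centred Gaussian with kernel $C_2(y,t)=F(y\wedge t)-F(y)F(t)$. Combining with Claims~1--2, the asymptotic independence, and $\sqrt{n/N}\to\sqrt{f}$, we obtain $W^{H*}\Rightarrow W_1^*+\sqrt{f}\,W_2^*$, centred Gaussian with covariance $C_1+fC_2=C$ of \eqref{Kernel}: this is R1. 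Finally R2 follows from R1, the uniform convergence $\widehat{F}_H\to F$, and the version of the functional delta method valid when \emph{both} arguments converge (continuous Hadamard differentiability of $\theta$ at $F$; cf. \cite{r9}), giving $\sqrt{n}(\theta(\widehat{F}_H^*)-\theta(\widehat{F}_H))\Rightarrow\theta'_F(W^*)$. All convergences inherit the nested ``a.a. $y_i,t_{ij}$ / $P$-probability tending to $1$'' qualifiers from Proposition \ref{glivcant} and Claim~1.

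The main obstacle is the verification, in the proof of Claim~1, that the predicted population genuinely satisfies the analogues of H1--H6. The delicate points are (a) the bivariate Glivenko--Cantelli statement for $\widehat{H}_H$---routine but needed---and, more seriously, (b) the convergence of the $\pi$ps-type averages $\tfrac1N\sum_k\pi_k^{*-1}$ and $\tfrac1N\sum_k\pi_k^*(1-\pi_k^*)$, which involves moments of $x_k^{*-1}$ under the H\'ajek resampling law and requires a uniform-integrability argument to pass in the limit from $\widehat{H}_H$ to $H$; this is precisely where the moment hypotheses of Section \ref{s2} are used in an essential way.
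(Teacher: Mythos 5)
Your proposal is correct and follows the same architecture as the paper's proof: the same two-phase decomposition $W^{H*}=W_n^{H*}+\sqrt{n/N}\,W_N^*$, the observation that conditionally on $s$ the predicted population consists of i.i.d.\ draws from the H\'ajek law so that $W_N^*$ is a classical (triangular-array) empirical process converging to $B(F(\cdot))$ via a Bickel--Freedman-type argument plus Proposition \ref{glivcant}, the same conditioning argument for asymptotic independence, and the delta method for R2. The one place where your route differs is Claim 1. The paper does not re-apply Claim 1 of Proposition \ref{main} to $\U^*$; instead it computes the multinomial moments of the multiplicities $N_i^*$, shows via Lemma \ref{lemma1} that $\E_{\MP^*}[N_i^*\vert\D,\Y,\T]=\pi_i^{-1}D_iB_1$ with $B_1\to1$ and that the variances and covariances are suitably bounded, and then invokes Proposition 5 of \cite{contmarmec}, a result tailored to predicted populations with random unit multiplicities. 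Your alternative --- verifying that $\U^*$ asymptotically satisfies the analogues of H1--H6 and ``feeding the limits'' back into Claim 1 of Proposition \ref{main} --- is sound in substance but cannot be a literal re-application: Claim 1 is stated for populations that are exactly i.i.d.\ from a fixed $\mathbb{P}$, whereas $\U^*$ is a triangular array drawn from the $N$-dependent law $\widehat{H}_H$ with repeated units. To make your argument rigorous you would need to know that the proof of Claim 1 uses only the convergence of the relevant population functionals ($F_N^*\to F$, $N^{-1}\sum_k x_k^*\to\E_\MP[X_1]$, the $K_l$-type averages, etc.), which is precisely what the external Proposition 5 packages; so the two routes end up relying on the same underlying fact, yours stated implicitly and the paper's delegated to the companion reference. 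Your flagged ``main obstacle'' (uniform integrability for $N^{-1}\sum_k\pi_k^{*-1}$ and $N^{-1}\sum_k\pi_k^*(1-\pi_k^*)$) is exactly the content of the paper's moment bounds on $N_i^*$ combined with Lemma \ref{lemma1}.
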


Proposition $\ref{main*}$ shows  that the resampled process
possesses the same limiting behavior as the original limiting process $W$ considered in Proposition \ref{main}.
In the spirit of \cite{r4}, it provides a full asymptotic justification of the resampling procedure considered in the present section.

\begin{rem}
{\rm{For a better understanding of why the resampling scheme introduced so far works, reconsider the \cite{holmberg98} scheme mentioned before,
which is a popular resampling scheme used in finite populations sampling.
For each unit $i$ in the sample $s$, let $R_i = \pi_i^{-1} -  \lfloor \pi_i^{-1} \rfloor$, and consider independent
Bernoulli r.v.s $\epsilon_i$s with $Pr ( \epsilon_i =1  \vert \D , \Y , \, \T ) = R_i$.
Let further $N^*_i = \lfloor \pi_i^{-1} \rfloor + \epsilon_i$. Even if
\begin{eqnarray}
\sum_{i=1}^{N} N^*_i \neq N \nonumber
\end{eqnarray}
it is shown in \cite{contmarmec} that a result similar to Proposition $\ref{main*} $ still holds. In other words the Holmberg scheme is able to recover the limit distribution of the process $W^H_n$. Clearly this is not enough. In our situation we have to take into account the superpopulation randomness (the process $W_N$ that converges to a Brownian Bridge), but the resampled version of $W_N$ under the Holmberg scheme, that is $\sqrt{n}
( F^{*}_{N^{*}} (y) - \widehat{F}_H (y))$, does not converge to a Brownian bridge. To show this, it is enough to observe first that adding and removing the quantity
\begin{eqnarray}
\frac{\displaystyle\sum_{i=1}^{N} \pi_i^{-1}
D_i I_{(y_{i} \leq y)}}{\displaystyle\sum_{i=1}^{N} \left ( \lfloor \pi_i^{-1} \rfloor + \epsilon_i \right )D_i} \nonumber
\end{eqnarray}
we have that
\begin{eqnarray}
\sqrt{n}
( F^{*}_{N^{*}} (y) - \widehat{F}_H (y) ) = A (y) + B(y) \label{scomp_1}
\end{eqnarray}
where
\begin{eqnarray}
A(y) & = & \sqrt{n} \left ( \frac{\displaystyle\sum_{i=1}^{N} \left ( \lfloor \pi_i^{-1} \rfloor + \epsilon_i \right )
D_i I_{(y_{i} \leq y)}}{\displaystyle\sum_{i=1}^{N} \left ( \lfloor \pi_i^{-1} \rfloor + \epsilon_i \right )D_i} -
\frac{\displaystyle\sum_{i=1}^{N} \pi_i^{-1}
D_i I_{(y_{i} \leq y)}}{\displaystyle\sum_{i=1}^{N} \left ( \lfloor \pi_i^{-1} \rfloor + \epsilon_i \right )D_i}
\right ) \nonumber \\
B(y) & = & \sqrt{n} \left ( \frac{\displaystyle\sum_{i=1}^{N} \pi_i^{-1}
D_i I_{(y_{i} \leq y)}}{\displaystyle\sum_{i=1}^{N} \left ( \lfloor \pi_i^{-1} \rfloor + \epsilon_i \right )D_i} -
\frac{\displaystyle\sum_{i=1}^{N} \pi_i^{-1}
D_i I_{(y_{i} \leq y)}}{\displaystyle\sum_{i=1}^{N} \pi_i^{-1}D_i}
\right ) . \nonumber
\end{eqnarray}
Conditionally on  $ \D , \Y , \, \T$, the variance of $\epsilon_i$ is $R_i (1- R_i) \leq 1/4$. Taking into account Lemma $\ref{lemma1}$,
and observing that
\begin{eqnarray}
&\E\left[\displaystyle\sum_{i=1}^{N} \left ( \lfloor \pi_i^{-1} \rfloor + \epsilon_i \right )D_i\right]=
\E_{P}\left[\E_{\MP^*}\left[\displaystyle\sum_{i=1}^{N} \left ( \lfloor \pi_i^{-1} \rfloor + \epsilon_i \right )D_i  \right] \right]=\\
&\E_{P}\left[\displaystyle\sum_{i=1}^{N} \left ( \lfloor \pi_i^{-1} \rfloor + R_i \right )D_i \right]=
\E_{P}\left[\displaystyle\sum_{i=1}^{N} \left ( \pi_i^{-1} \right)D_i \right]=N\footnotemark
\end{eqnarray}
\footnotetext{The symbol $\E_{\MP^*}[\cdot]$ defines the expected value where the only variability is due to the pseudo-population randomness}

this shows that the limiting distribution of $A(y)$ coincides with the limiting distribution of
\begin{eqnarray}
\frac{\sqrt{n}}{N} \sum_{i=1}^{N} ( \epsilon_i - R_i ) D_i I_{(y_i \leq y)} . \nonumber
\end{eqnarray}
In a similar way, it can be shown that the limiting distribution of $B(y)$ coincides with the limiting distribution of
\begin{eqnarray}
- \frac{\sqrt{n}}{N} \sum_{i=1}^{N} ( \epsilon_i - R_i ) D_i F_N (y)  \nonumber
\end{eqnarray}
and hence the limiting distribution of $( \ref{scomp_1} )$ coincides with the limiting distribution of
\begin{eqnarray}
C(y) = \frac{\sqrt{n}}{N} \sum_{i=1}^{N} ( \epsilon_i - R_i ) D_i \left ( I_{(y_i \leq y)} - F_N (y) \right ) . \nonumber
\end{eqnarray}
The arguments of  Proposition $\ref{main}$ can be used to show that, conditionally on $\Y$, $\T$, $C(y)$ converges to a Gaussian process for
almost all $y_i$s, $t_{ij}$s, and for a set of $\D$s of $P$-probability tending to $1$. To show that $C(y)$ does not tend to a Brownian bridge, it is
sufficient to show that the asymptotic variance of $C(y)$ is not $F(y) (1- F(y))$. Since the conditional expectation of $\epsilon_1 - R_i$ is zero,
we have
\begin{eqnarray}
\mathbb{V} ( C(y) \vert \Y , \T ) & = & \frac{n}{N^{2}} \sum_{i=1}^{N} \E \left [
R_i (1- R_i ) D_i \vert  \Y , \T \right ] \left ( I_{(y_{i} \leq y)} - F_N (y) \right )^2  \nonumber \\
\, & = & \frac{n}{N^{2}} \sum_{i=1}^{N} R_i (1- R_i ) \pi_i^{-1 } \left ( I_{(y_{i} \leq y)} - F_N (y) \right )^2  \nonumber \\
\, & \rightarrow & \E_{\MP} [X_1 ] \, \E_{\MP} \left [ R_1 (1- R_1) X_1^{-1} \left ( I_{(Y \leq y)} - F (y) \right )^2 \right ] \nonumber \\
\, & \neq & F(y) (1- F(y)) = \E_{\MP} \left [  \left ( I_{(Y \leq y)} - F (y) \right )^2 \right ] . \nonumber
\end{eqnarray}
The failure of Holmberg scheme is a consequence of a simple fact: the scheme itself cannot recover the generation process of the finite
population from the superpopulation. }}
\end{rem}

\begin{rem}
\rm{In opposition to the Holmberg scheme that fails because of its inability of recovering the superpopulation model contribution to the limiting distribution of the process \eqref{Proc}, the Efron's Bootstrap fails because it is not able to recover the contribution of the sampling design to the limiting distribution of the sequence \eqref{Proc}. In fact, let $s^*$ be a sample of $n$ $i.i.d.$ observations obtained by sampling with replacement from the
consistent estimator $\widehat{F}_H$ of the superpopulation distribution function $F$. Consider now the resampling process
\begin{equation}\label{boot_proc}
\sqrt{n}\left(\widehat{F}_n^*-\widehat{F}_H \right)
\end{equation}
\noindent Following the same approach of \cite{r4}, it easy to see that the process \eqref{boot_proc}, conditionally on the original sample and finite population, behaves as a Brownian bridge when increasing $n$ and $N$. The latter consideration implies that the classic Bootstrap procedure is able to recover only the contribution of the superpopulation model to the limiting distribution of the process \eqref{Proc}, completely ignoring the variability due to the sampling design. }
\end{rem}
\section{Resampling procedure: Monte Carlo algorithm}\label{s5b}

Clearly, resampling is performed by resorting to Monte Carlo simulations and this is a computer-intensive procedure. Thus, due to the factorial growth of the cardinality of the space of the boostrap samples, recovering the true asymptotic (resampling) distribution is practically infeasible.
To avoid this problem, we want to approximate the true asymptotic distribution of the H\'{a}jek  estimator with the simulated resampling distribution. This procedure will be now clarified. For the sake o simplicity we assume $\theta(\cdot)$ to be real-valued, that is considering scalar parameters of the superpopulation.
\begin{itemize}
\item[1.] Generate $M$ independent bootstrap samples of size $n$ on the basis of the two-phase resampling procedure described above.
\item[2.] For each bootstrap sample, compute the corresponding H\'{a}jek estimator (\ref{ResampledHajek}), denoted by $\hat{F}_{H,m}^*, \ m=1,2,\ldots,M$.
\item[3.] Compute the corresponding estimates of $\theta(\cdot)$:
$$ \hat{\theta}^*_m=\theta(\hat{F}_{H,m}^*), \ m=1,2,\ldots,M. $$
\item[4.] Compute the $M$ quantities
\begin{equation}\label{Znm}
Z_{n,m}^*=\sqrt{n}(\hat{\theta}^*_m-\theta(\hat{F}_{H}^*)), \ m=1,2,\ldots,M.
\end{equation}
\item[5.] Compute the variance of (\ref{Znm}):
\begin{equation}
\hat{S}^{2*}=\dfrac{1}{M-1}\sum_{m=1}^M(Z_{n,m}^*-\bar{Z}_M^*)^2=\dfrac{n}{M-1}\sum_{m=1}^M(\hat{\theta}^*_m-\bar{\theta}_M^*)
\end{equation}
where
$$\bar{Z}_M^*=\dfrac{1}{M}\sum_{m=1}^M Z_{n,m}^*, \ \bar{\theta}_M^*=\dfrac{1}{M}\sum_{m=1}^M \hat{\theta}^*_m.  $$
Denote further by
\begin{equation}\label{ECDF}
\hat{R}^*_{n,M}(z)=\frac{1}{N}\sum_{m=1}^MI_{(Z_{n,m}^*\leq z)}, \ z\in \mathbb{R}
\end{equation}
\noindent the empirical distribution function of $Z_{n,m}^*$.
\end{itemize}
The empirical distribution (\ref{ECDF}) is essentially an approximation of the resampling distribution of $\sqrt{N}(\theta(\hat{F}_N^*(y))-\theta(\hat{F}_H(y)))$.
Next proposition establishes convergence of the empirical distribution (\ref{ECDF}) to the actual asymptotic distribution of the resampled process.
\begin{prop}\label{Last-prop}
Suppose the assumptions $H1-H6$ are fulfilled, let $\sigma^2_{\theta}=\mathbb{V}_{\MP}(\theta(F))$, and let $\Phi_{0,\sigma^2_{\theta}}$ be a normal distribution function with expectation $0$ and variance $\sigma^2_{\theta}$. The following result holds:
\begin{equation}\label{LP-1}
\sup_z\lvert \hat{R}^*_{n,M}(z) - \Phi_{0,\sigma^2_{\theta}}(z) \rvert \xrightarrow{a.s.-\MP^*} 0,\ \text{as } M,N \text{ go to infinity.}
\end{equation}
The convergence holds for a almost all $y_i$s and $t_{ij}$s, for a set of $\D$s of $P-$probability tending to $1$, and is in probability w.r.t. $P^*$.\\
If, in addition, $\E_{\MP^*}[Z_{n,m}^{2*}]<\infty$,  the sample variance $S^{2*}$ of $(Z_{n,m}^*, \ m=1,2,\ldots,M)$ is a consistent estimator of
$\sigma^2_{\theta}$, as $M,N$ tend to infinity.
\end{prop}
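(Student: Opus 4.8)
The plan is to decompose the problem into a "resampling-randomness" part and a "Monte Carlo" part, and handle them separately. First I would fix a realization of the $Y_i$s, $T_{ij}$s (in the probability-one set from Proposition \ref{main*}) and of the original design $\D$ (in the $P$-probability-tending-to-$1$ set), so that only the pseudo-population randomness $\MP^*$ and the Monte Carlo draws remain. By Proposition \ref{main*}, claim R2, the resampled quantity $Z_{n}^* = \sqrt{n}(\theta(\widehat{F}_H^*) - \theta(\widehat{F}_H))$ converges weakly, conditionally on the above, to $\theta'_F(W^*)$, which is centered Gaussian with variance $\sigma^2_\theta = \E[\theta'_F(W)^2]$; by the remark following Proposition \ref{Had} this equals $\mathbb{V}_{\MP}(\theta(F))$ in the notation of the statement. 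Hence the conditional distribution function $R_n^*(z) := \MP^*(Z_n^* \leq z \mid \D, \Y, \T)$ satisfies $\sup_z |R_n^*(z) - \Phi_{0,\sigma^2_\theta}(z)| \to 0$ as $N \to \infty$, using Pólya's theorem (uniform convergence of distribution functions to a continuous limit follows from pointwise convergence).

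Next I would bring in the Monte Carlo average. Conditionally on everything except the $M$ independent bootstrap replications, the quantities $Z_{n,1}^*, \dots, Z_{n,M}^*$ are i.i.d.\ with common distribution function $R_n^*(\cdot)$, so $\widehat{R}^*_{n,M}(z)$ is their empirical distribution function (modulo the harmless typo $1/N$ for $1/M$ in \eqref{ECDF}). The classical Glivenko--Cantelli theorem gives $\sup_z |\widehat{R}^*_{n,M}(z) - R_n^*(z)| \to 0$ almost surely as $M \to \infty$, for each fixed $N$. Combining this with the previous step via the triangle inequality,
\begin{equation*}
\sup_z |\widehat{R}^*_{n,M}(z) - \Phi_{0,\sigma^2_\theta}(z)| \leq \sup_z |\widehat{R}^*_{n,M}(z) - R_n^*(z)| + \sup_z |R_n^*(z) - \Phi_{0,\sigma^2_\theta}(z)|,
\end{equation*}
and letting $M, N \to \infty$ yields \eqref{LP-1}. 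Some care is needed in the order of limits: a clean way is to invoke a Glivenko--Cantelli bound uniform in $N$ (the Dvoretzky--Kiefer--Wolfowitz inequality gives $\MP^*(\sup_z |\widehat{R}^*_{n,M}(z) - R_n^*(z)| > \varepsilon) \leq 2e^{-2M\varepsilon^2}$ independently of $N$), so that along any joint sequence $M, N \to \infty$ the first term vanishes in $P^*$-probability while the second vanishes deterministically; this matches the mode of convergence asserted (a.s.\ in $Y,T$, $P$-probability tending to $1$ in $\D$, in probability in $P^*$).

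For the final sentence, the variance $S^{2*}$ is the sample variance of the i.i.d.\ sample $Z_{n,1}^*, \dots, Z_{n,M}^*$ drawn from $R_n^*$, whose mean is $0$ (asymptotically, by R2) and whose variance tends to $\sigma^2_\theta$. Under the added hypothesis $\E_{\MP^*}[Z_{n,m}^{2*}] < \infty$ one applies the weak law of large numbers to $\frac{1}{M}\sum_m (Z_{n,m}^*)^2$ and to $\bar Z_M^*$: for fixed $N$, $S^{2*} \to \mathbb{V}_{\MP^*}(Z_n^* \mid \D, \Y, \T)$ as $M \to \infty$, and this conditional variance converges to $\sigma^2_\theta$ as $N \to \infty$ provided the second moments are uniformly integrable. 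I expect the uniform integrability needed to pass from convergence in distribution of $Z_n^*$ to convergence of its variance to be the main technical obstacle — weak convergence alone does not transfer moments, so one must either assume it (as the statement does for the variance claim) or derive a uniform $L^2$ bound on $Z_n^*$ from the structure of the H\'{a}jek estimator and the high-entropy design; likewise, justifying the joint double limit rather than iterated limits requires the DKW-type uniform bound mentioned above, which is the point most likely to need explicit argument.
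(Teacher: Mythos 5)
Your argument is correct and follows essentially the same route as the paper: the Dvoretzky--Kiefer--Wolfowitz bound for $\sup_z|\hat{R}^*_{n,M}(z)-R_n^*(z)|$, uniform convergence of the true resampling distribution $R_n^*$ to $\Phi_{0,\sigma^2_\theta}$ (which you justify more carefully via Proposition \ref{main*} and P\'olya's theorem than the paper's bare citation), and a triangle inequality. The only step you stop short of is upgrading the $P^*$-in-probability statement to the almost-sure one, which follows at once from the DKW bound you already wrote down, since $\sum_M 2e^{-2M\varepsilon^2}<\infty$ and the first Borel--Cantelli lemma applies; your remarks on the variance claim address a point the paper's proof leaves entirely implicit.
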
.

\section{Applications}\label{s6}
\subsection{Confidence intervals for quantiles}
The aim of this Section is to test the performance of our resampling procedure when dealing with confidence intervals for quantiles.
The (superpopulation) quantile function is
\begin{equation}
Q(p)=\inf\{y \in \mathbb{R} :\ F(y)\geq p  \}=F^{-1}(p), \ \text{with } 0<p<1 ,
\end{equation}
{\em i.e.} $Q(\cdot)$ is the left-continuous inverse function of $F(\cdot)$.
Let now focus on the real-valued functional $\theta_p(\cdot): D[-\infty,+\infty]\to \mathbb{R}$ that brings the distribution function $F$ in its quantile of order $p$ (i.e. $\theta_p(F)=F^{-1}(p)=Q(p)$).
In \cite{r9} Lemma 21.3 it is shown that, if $F$ is differentiable at point $q_p$ (such that $F(q_p)=p$), with $F'(q_p)=p>0$, then $\theta_p(F) = Q(p)$ is Hadamard-differentiable at $F$.
As a consequence, all the results obtained in Sections \ref{s3}-\ref{s5} are valid, and
\begin{equation}\label{CI}
[\hat{L}^j,\hat{U}^j]=\left[\theta_p(\hat{F}_H)+z_{\frac{\alpha}{2}}\dfrac{S^*}{\sqrt{n}},\theta_p(\hat{F}_H)+z_{1-\frac{\alpha}{2}}\dfrac{S^*}{\sqrt{n}} \right]
\end{equation}
is a confidence interval for $Q(p)$ of asymptotic size $1-\alpha$, where $z_{\alpha}$ is the quantile of order $\alpha$ of a standard normal distribution.\\
In order to test the performance of our resampling procedure we conduct a small simulation study.
For our simulations we assume the same superpopulation model as in \cite{r10}, {\em i.e.}
\begin{equation}\label{Pop}
Y=(\beta_0+\beta_1X^{1.2}+\sigma \epsilon)^2+c
\end{equation}
where $X\sim |N(0,7)|$, $\epsilon\sim N(0,1)$, $\beta_0=12.5$, $\beta_1=3$, $\sigma=15$ and $c=4000$.
Parameters in (\ref{Pop}) are chosen in order to have a distribution of the character of interest similar to a log-normal distribution.
In addition, a design variable $X=Y^{0.2}W$, with $W\sim \log N(0,0.125)$, is considered, and inclusion probabilities are taken proportional to $X$ values.
In both sampling and resampling procedures we consider Pareto design.
In the simulation study we have investigated the behavior of our proposal in two situations: large sampling fraction ($f=1/3$), and small sampling fraction ($f=1/10$), with different sample sizes ($n=50, \, 150$). For each sampling fraction and sample size we have generated $1000$ finite populations and for each sample selected from these populations, $M=1000$ bootstrap samples are drawn.
Using our resampling scheme, confidence intervals for quantiles of order $p=0.10, 0.25, 0.5, 0.75, 0.9$ have been computed according to formula (\ref{CI}), with a confidence level of $95\%$. In order to test the performance of our procedure, the following indicators have been computed.
\begin{itemize}
\item[1.] Estimated Coverage Probability
\begin{equation}\label{CP}
CP=\displaystyle \dfrac{1}{M}\sum_{j=1}^M I(\hat{L}^j\leq \hat{q}_p\leq \hat{U}^j) .
\end{equation}
\item[2.] Estimated Left and Right Errors
\begin{align}
\label{LE}&LE=\dfrac{1}{M}\sum_{j=1}^M I(\hat{L}^j> \hat{q}_p);\\
\label{RE}&RE=\dfrac{1}{M}\sum_{j=1}^M I(\hat{U}^j< \hat{q}_p) .
\end{align}
\item[3.] Average Length
\begin{equation} \label{AVL}
AL=\dfrac{1}{M}\sum_{j=1}^M \left(\hat{U}^j-\hat{L}^j\right).
\end{equation}
\end{itemize}
In $(\ref{CP})$ - $(\ref{RE})$ the quantity $\hat{q}_p$ is the empirical $p-$quantile obtained simulating $1000000$ values from the model (\ref{Pop}).\\
Next tables show the estimated quantities $(\ref{CP})$ - $(\ref{AVL})$ in different situations.\\

\begin{table}[h]
\centering\parbox{0.6\linewidth}{

\begin{tabular}{ |l|c|c|c|c|c| }
  \hline
  \multicolumn{6}{|c|}{$\mathbf{f=1/10}$,\ $\mathbf{1-\boldsymbol{\alpha}=0.95}$,\ $\mathbf{n=50}$} \\
  \hline
  \backslashbox{}{p} & $0.10$ & $0.25$ & $0.50$ & $0.75$ & $0.90$ \\
\hline
CP & $0.948$ & $0.943$ & $0.939$ & $0.931$ & $0.921$ \\
LE & $0.027$ & $0.02$ & $0.026$ & $0.017$ & $0.014$ \\
RE & $0.025$ & $0.037$ & $0.035$ & $0.052$ & $0.065$ \\
AL & $302.751$ & $663.436$ & $1284.412$ & $2588.063$ & $5195.219$ \\
  \hline
\end{tabular}
\caption{Results with a finite population of $N=500$ units, a true confidence level of $95\%$ and sample size $n=50$.}
\label{table:1}
}

\end{table}

\begin{table}[h]
\centering\parbox{0.6\linewidth}{

\begin{tabular}{ |l|c|c|c|c|c| }
  \hline
  \multicolumn{6}{|c|}{$\mathbf{f=1/3}$,\ $\mathbf{1-\boldsymbol{\alpha}=0.95}$,\ $\mathbf{n=50}$} \\
  \hline
\backslashbox{}{p} & $0.10$ & $0.25$ & $0.50$ & $0.75$ & $0.90$ \\
\hline
CP & $0.949$ & $0.944$ & $0.933$ & $0.929$ & $0.925$ \\
LE & $0.03$ & $0.022$ & $0.017$ & $0.023$ & $0.018$ \\
RE & $0.021$ & $0.034$ & $0.05$ & $0.048$ & $0.057$ \\
AL & $301.26$ & $654.764$ & $1274.644$ & $2583.419$ & $5143.599$ \\
  \hline
\end{tabular}
\caption{Results with a finite population of $N=150$ units, a true confidence level of $95\%$ and sample size $n=50$.}
\label{table:2}}

\end{table}

\begin{table}[H]
\centering\parbox{0.6\linewidth}{

\begin{tabular}{ |l|c|c|c|c|c| }
  \hline
  \multicolumn{6}{|c|}{$\mathbf{f=1/10}$,\ $\mathbf{1-\boldsymbol{\alpha}=0.95}$,\ $\mathbf{n=150}$} \\
  \hline
  \backslashbox{}{p} & $0.10$ & $0.25$ & $0.50$ & $0.75$ & $0.90$ \\
\hline
CP & $0.939$ & $0.957$ & $0.949$ & $0.941$ & $0.941$ \\
LE & $0.02$ & $0.016$ & $0.019$ & $0.014$ & $0.009$ \\
RE & $0.041$ & $0.027$ & $0.032$ & $0.045$ & $0.05$ \\
AL & $149.996$ & $363.505$ & $700.394$ & $1438.521$ & $2874.784$ \\
  \hline
\end{tabular}
\caption{Results with a finite population of $N=1500$ units, a true confidence level of $95\%$ and sample size $n=150$.}
\label{table:3}
}

\end{table}

\begin{table}[H]
\centering\parbox{0.6\linewidth}{

\begin{tabular}{ |l|c|c|c|c|c| }
  \hline
  \multicolumn{6}{|c|}{$\mathbf{f=1/3}$,\ $\mathbf{1-\boldsymbol{\alpha}=0.95}$,\ $\mathbf{n=150}$} \\
  \hline
\backslashbox{}{p} & $0.10$ & $0.25$ & $0.50$ & $0.75$ & $0.90$ \\
\hline
CP & $0.94$ & $0.94$ & $0.937$ & $0.943$ & $0.91$ \\
LE & $0.031$ & $0.019$ & $0.025$ & $0.015$ & $0.02$ \\
RE & $0.029$ & $0.041$ & $0.038$ & $0.042$ & $0.07$ \\
AL & $154.518$ & $353.819$ & $698.809$ & $1418.626$ & $2893.49$ \\
  \hline
\end{tabular}
\caption{Results with a finite population of $N=450$ units, a true confidence level of $95\%$ and sample size $n=150$.}
\label{table:4}}

\end{table}

From tables \ref{table:1}-\ref{table:4}, it is seen that the estimated coverage probability is close to the nominal confidence level in both situations
of large and small sampling fractions, as well as for the different considered sample sizes.
As it can be expected, a lower performance is obtained when a confidence interval for extreme quantile of order $p=0.90$ is considered. In this case we have the worst estimated coverage probability in every simulated scenario, and also the most unbalanced tail errors. Generally, problem of estimating extreme quantiles is a hard problem. Going further, the considered population is highly positive skewed, thus the estimation of quantities in the right tail of the population distribution can be difficult.

\subsection{Testing for conditional independence}
The goal of this paragraph is to perform an independence test for two interest characters, conditionally on discrete design variables $T_js$. For the sake of simplicity we will consider a single design variable $T$, thus we are considering a test of the form
$$\begin{cases}
H_0: &H(x,y|T)=F(x|T)G(y|T)\\
H_1: &H(x,y|T)\neq F(x|T)G(y|T)\\
\end{cases}$$
To achieve this purpose, the general measure of monotone dependence, proposed in \cite{r12} is extended to the present case.
Given two continuous variables $X,Y$, let $F(x)$ and $G(y)$ be their marginal distributions and $H(x,y)$ the joint distribution of the bivariate variable $(X,Y)$.
A general measure of the monotone dependence $\gamma_g$ between $X$ and $Y$, is a real-valued functional $\gamma_g$ of the bivariate distribution $H(x,y)$ defined as follows
\begin{equation}\label{def:gamma}
\gamma_g=\int_{\mathbb{R}^2}g(|F(x|T)+G(y|T)-1|)-g(|F(x|T)-G(y|T)|) \ dH(x,y|T),
\end{equation}
where $g:[0,1]\to\mathbb{R}$ is a strictly increasing, continuous and convex function, such that $g(0)=0$ with continuous first derivative. Under the null hypothesis of independence the latter quantity is equal to zero.\\
The basic idea is to estimate the quantity $\gamma_g$ with a plug-in approach, substituting the distributions functions
\begin{equation}\label{def:samplegammacond}
\hat{\gamma}_{g,H|T}=\dfrac{\displaystyle\sum_{i\in s}\frac{1}{\pi_i} \left( g(|F(x_i|T_i)+G(y_i|T_i)-1|)-g(|F(x_i|T_i)-G(y_i|T_i)|)\right)}{\displaystyle\sum_{i \in s} \frac{1}{\pi_i}}.
\end{equation}
Before analyzing the Hadamard-differentiability of \eqref{def:gamma}, we stress that our results are given for the univariate case, but they can be simply generalized to the multivariate case.
To show the Hadamard-differentiabilty of the considered functional, it is sufficient to use the same arguments as the proof of Theorem 4.1. in \cite{r12} and then use result (4) in \cite{r23}.\\
Before illustrating our simulation study, it is important to stress a couple of remarks.\\
First of all, the design variable is supposed discrete for the sake of simplicity. In fact, estimating conditional d.f. when the conditioning variable is discrete, does not involve different estimation techniques, but only focusing on a subgroup of the population than the whole population. Allowing the conditioning variable being continuous implies more complex estimator of the distribution function (like kernel estimators) that fall outside the spirit of our paper.
The second remark is about the resampling procedure. In order to perform a test with resampling techniques, it is necessary to resample under the null hypothesis, thus in our case we need to resample under the hypothesis of conditional independence of the the two interest characters $X,Y$. To this purpose the pseudo-population generation phase of our resampling technique has been modified as follows.
According to the previous notation $X,Y$ are variables of interest and $T$ takes values $T^1,\ldots,T^k$. In addition, let $s$ be a sample of units selected from a $N-sized$ finite population $\mathcal{U}_n$ with a $\pi p s$ sampling design $P$, where the inclusion probabilities $\pi_i \propto T_j$. Define
$s_j=\{i \in s | t_i=T^j\},\ j=1,\ldots,k$, the set of sampled units with T-value equal to $T^j$. Let $n_1,\ldots,n_k$ be the size of $s_1,\ldots,s_k$.
Firstly, a pseudo-population of $N$ values $T_1^*,\ldots,T_N^*$ is generated, where each unit is selected independently with probability
$\pi_i^{-1}/  \sum_{j\in s} \pi_j^{-1}$. Then, for $l=1,\ldots,N$, if $T_l^*=T^j,\ j=1,\ldots,k$ we sample independently from $s_j$, with probability $\pi_i^{-1}/  \sum_{j\in s} \pi_j^{-1}$ a $X$-value $X_l^*$ and a $Y$-value $Y_l^*$.
At the end of this procedure, a pseudo-population $\mathcal{U}^*_N=(X_l^*,Y_l^*,T_l^*,\ l=1,\ldots,N)$ is obtained, where such that $X^*$ and $Y^*$ are independent conditionally on $T^*$. At this point the second phase of the resampling method as shown in section \ref{s5}, can be used.
The considered resampling scheme to recover the distribution of $\sqrt{n}(\hat{\gamma}_{g,H|T}-\gamma_{g|T})$ under the null hypothesis of independence and hence to perform the test.\\
In the sequel, we will focus on a simulation study where the function $g(s)=s^2$. With this choice of $g$, the coefficient $\gamma_g$ become exactly the non-normalized version of the Spearman's rank coefficient $\rho_s$ (for more see \cite{r12}).

For the simulations study we assumed that in the superpopulation there are four strata, indexed by the discrete variable $T \in \{1,2,3,4\}$. For each stratum we have the interest variables $(X,Y)$ distributed as a bivariate normal $N(\mu_T,\Sigma)$ where
\begin{align}
\label{sigma:spear}\Sigma&=\left(\begin{array}{cc}
150^2 & 150\cdot 60\cdot 2\cdot \sin(\frac{\pi}{6}\cdot \rho_s) \\
150\cdot 60\cdot 2\cdot \sin(\frac{\pi}{6}\cdot \rho_s) & 60^2
\end{array}  \right)\\
\mu_T&=\begin{cases}
(800,300)' &\text{if } T=1\\
(900,400)' &\text{if } T=2\\
(1000,500)' &\text{if } T=3\\
(1100,600)' &\text{if } T=4\\
\end{cases}
\end{align}
In addition, each stratum has a weight in the superpopulation equal to $$
\omega_T=\begin{cases}
0.4 &\text{if } T=1\\
0.3 &\text{if } T=2\\
0.2 &\text{if } T=3\\
0.1 &\text{if } T=4\\
\end{cases}.$$
Setting the covariance matrix as in (\ref{sigma:spear}), involve having exactly a Spearman's correlation coefficient between $X,Y$ of $\rho_s$  (for a proof see for instance \cite{r25}), in each group. Thus we have an overall Spearman's correlation coefficient conditionally on $T$ equal to $\rho_s$.
In this setting our test becomes
$$\begin{cases}
H_0: &\rho_{s|T}=0\\
H_1: &\rho_{s|T}\neq 0\\
\end{cases}$$
with a (estimated) region of rejection of the form \{$\lvert\hat{\rho}_{s,H|T}\rvert>c(\alpha)$\}, where $c(\alpha)$ is the $1-\frac{\alpha}{2}$-quantile of the resampling (under null hypothesis) distribution of $\hat{\rho}_{s,H|T}$.\\
As in the previous paragraph, sample sizes $n=50,150$ and sampling fractions $f=1/3,1/10$ are considered.
For each sample size and sampling fractions, $1000$ finite populations have been generated and for each sample selected from these populations, $M=1000$ bootstrap samples have been drawn. In addition, two sampling scenarios have been considered. The first one (CP-PA) where samples are selected according to a Conditional Poisson (CP) sampling design and in resampling procedure a Pareto (PA) design has been used. The second one (PA-PA) where both sampling and resampling are implemented in a Pareto (PA) design.\\
A test of nominal level $\alpha=5\%$ has been performed to evaluate the performance of such procedure, the estimated type $I$ error, the median of estimated  P-value (to limit the influence of the extreme estimated P-values) and the estimated power function, have been computed. The results of our simulation study are summarized below.
\begin{table}[h]
\centering\parbox{0.6\linewidth}{
\begin{tabular}{|l|c|c|}
\hline
Sample size and Sampling fraction  & $\hat{\alpha}$ (CP-PA)&$\hat{\alpha}$ (PA-PA)  \\
\hline
$n=50,\ \ f=0.1$ & $0.053$ & $0.051 $     \\
$n=150,\ f=0.1$ & $0.045$ &  $0.048 $     \\
$n=50,\ \ f=0.3$ & $0.06$ &  $0.051 $    \\
$n=150,\ f=0.3$ & $0.05$&    $0.048 $     \\
\hline
\end{tabular}
\caption{Estimated first type error for different sample size and sampling fractions. Nominal $\alpha=5\% $.}
\label{table:5}}
\end{table}

\begin{figure}[H]
\centering
  \includegraphics[width=\linewidth]{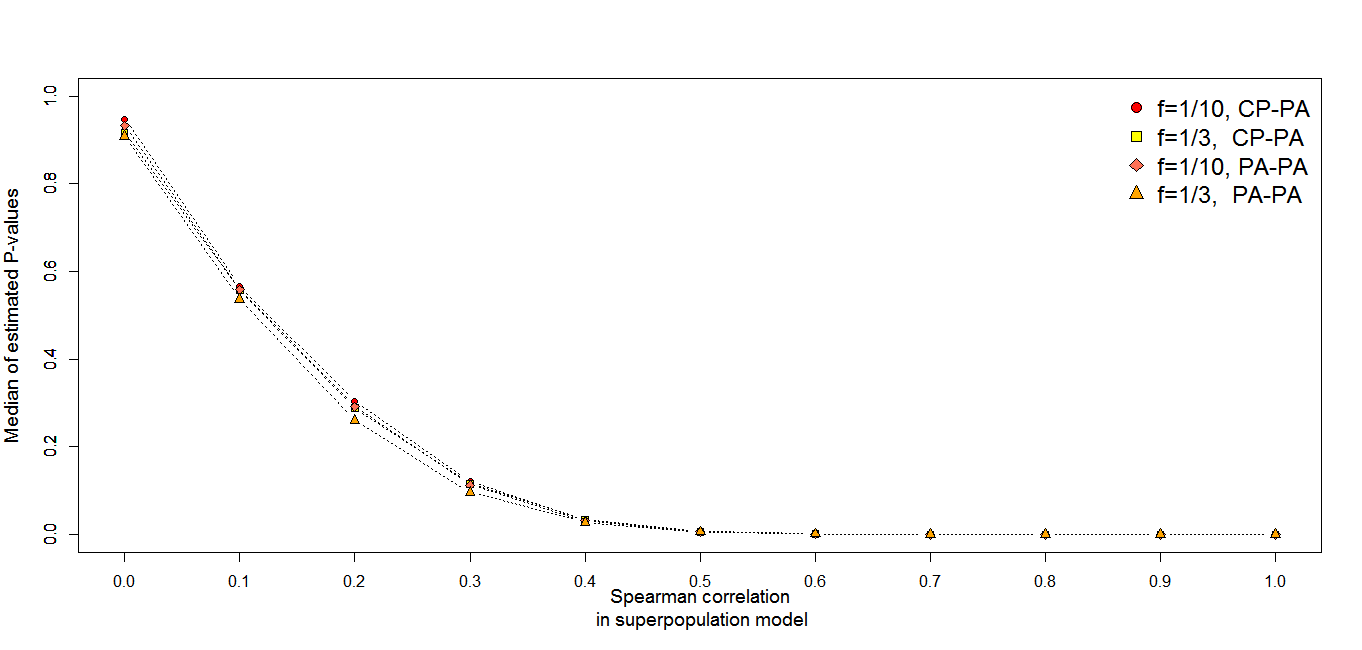}
  \caption{Median of estimated P-values for each level of correlation with $n=50, \ f=1/3,\ f=1/10.$ and considering CP-PA, PA-PA scenarios}
  \label{fig:sub1}
\end{figure}
\begin{figure}[H]
  \centering
  \includegraphics[width=\linewidth]{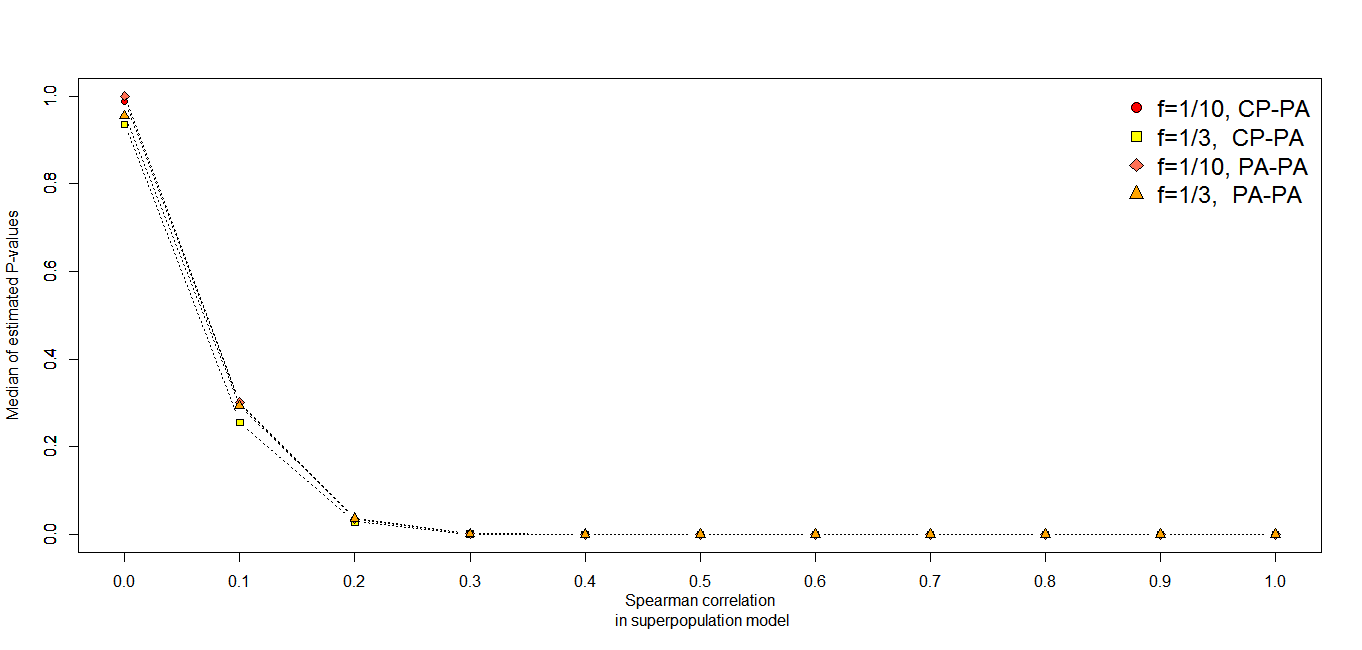}
  \caption{Median of estimated P-values for each level of correlation with $n=50, \ f=1/3,\ f=1/10.$ and considering CP-PA, PA-PA scenarios}
  \label{fig:sub2}
\end{figure}

\begin{figure}[H]
\centering

  \includegraphics[width=1\linewidth]{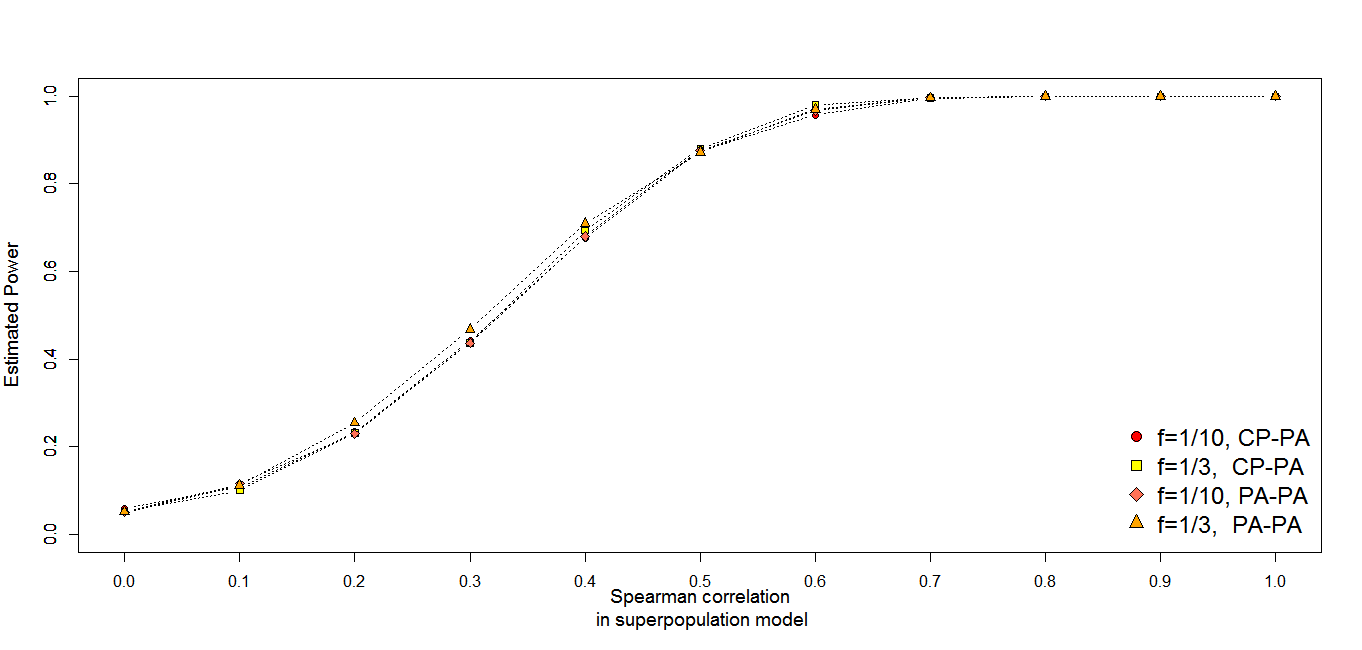}
  \caption{Estimated power function where $n=50, \ f=1/3,\ f=1/10.$ and considering CP-PA, PA-PA scenarios}
  \label{fig:sub3}
\end{figure}
\begin{figure}[H]
  \centering
  \includegraphics[width=1\linewidth]{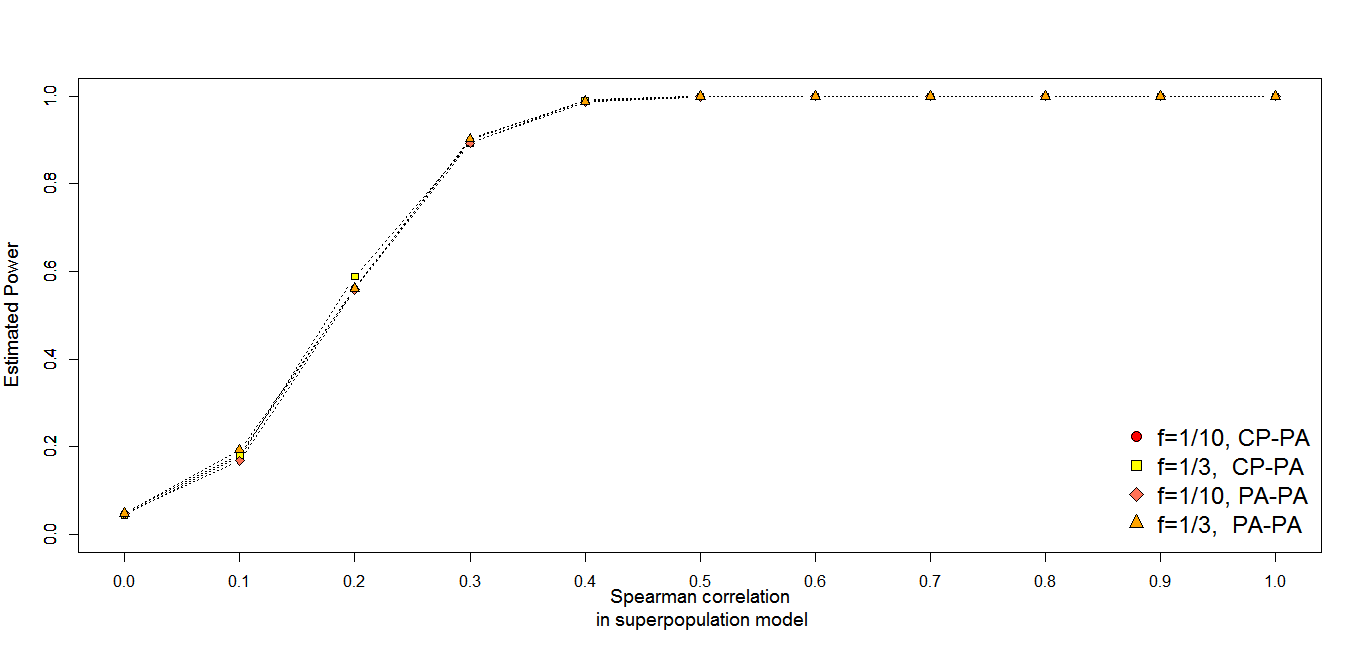}
  \caption{Estimated power function where $n=150, \ f=1/3,\ f=1/10.$ and considering CP-PA, PA-PA scenarios}
  \label{fig:sub4}
\end{figure}

From Table \ref{table:5}, it is immediately seen that our procedure works well in both situations of a small and big sampling fractions. In addition, we can notice that using two different sampling designs for the sampling and resampling stages give results similar to those obtained by using the same sampling design. In both situations the estimated level $\hat{\alpha}$ is very close the nominal level of $5\%$. Another important remark is that PA-PA scenario seems to be more stable respect to the CP-PA one, in the sense that the estimated type $I$ error fluctuates less in PA-PA scenario than in CP-PA.
As far as the estimated P-values and the estimated power function are concerned, it is seen from Figures \ref{fig:sub1}-\ref{fig:sub4} that the differences for the same sample size in the different scenarios are small, but we have generally lower P-values, thus a better performance, when considering $f=1/3$. For a sample size of $n=50$ the median of estimated P-values becomes zero when the Spearman correlation is 0.6. Of course increasing sample size results in a decrease of the Spearman correlation beyond which the median of the estimated P-values is zero. The analysis of estimated power functions involves similar conclusions. In fact, for a sample size $n=50$ the estimated power functions are very similar, but the power is higher in the case of a larger sampling fraction. This result is reasonable. In fact, the larger the sampling fraction the larger the information that the sample carries. In this particular case, a larger sampling fraction allows more easily the sample to reconstruct the correlation structure in the finite population (and in the superpopulation).

\subsection{Testing for marginal independence}
The goal of the present section is to construct a test for the marginal independence of two (continuous) characters of interest $Y$, $Z$, without conditioning on the design variables $T_j$s.
For the sake of simplicity, in the sequel we will consider a single design variable $T$, say.
The general framework is the same of the previous Section, {\em i.e.} from the above results we use the measure of monotone dependence with $g(s)=s^2$ to test
$$\begin{cases}
H_0: &\rho_{s}=0\\
H_1: &\rho_{s}\neq 0\\
\end{cases}.$$
Clearly all the results derived in the previous paragraph are still valid, in particular we have that $\sqrt{n}(\hat{\gamma}_{g,H}- \hat{\gamma}_{g})$ is asymptotically normal with zero mean and a complex variance that depends on the Hadamard derivative of the functional $\gamma_g$.
Although from the theoretical point of view it seems to be an easy problem, from the practical point of view it presents more difficulties than the case analyzed before. In fact, performing a test with resampling requires the ability of sampling under the null hypothesis. In this framework, for each sampling unit we have a
triplet $(y_i , \, z_i , \, t_i)$; thus, a unique sample value of $T$ is  associated to each pair $(y_i , \, z_i)$.
In order to apply our resampling procedure to the testing problem, we have to generate a pseudo-population $Y_i^* , \, Z_i^* , \, Y_i^*$ from the sample values in such a way that $Y^*$ and $Z^*$ are marginally independent (null hypothesis). Independence can be obtained by sampling independently from the marginal (H\'{a}jek) estimators of the distribution functions of $Y$, $Z$.  However, in this way it is not possible to uniquely associate a value $T_i^*$ to each pair ($Y_i^* , \, Z_i^*$). To avoid this problem,
we look at the testing problem as the inverse of an interval confidence problem: an asymptotic confidence interval of size $1-\alpha$ provides an asymptotic test of size $\alpha$.
Of course this way of looking at the problem is simpler but has some limits. One of them is that we cannnot provide estimated p-values, because for
 their computation it is necessary to resample under the null hypothesis.\\
With the previous notation, the following interval
\begin{align}
\left[\hat{\gamma}_{g,H}+z_{\frac{\alpha}{2}}\sqrt{\frac{S^{2*}}{n}},\hat{\gamma}_{g,H}+z_{1-\frac{\alpha}{2}}\sqrt{\frac{S^{2*}}{n}}\right]
\end{align}
is a confidence interval for $\gamma_g$ of asymptotic size $1-\alpha$. The null hypothesis of independence is accepted if 0 lies in the interval,  and rejected otherwise.\\
In our simulation study, $(Y, \, Z)$ is assumed to be a bivariate Marshall-Olkin copula (for more see \cite{r27}, \cite{r28}, \cite{r26}). One of the advantages of the bivariate Marshall-Olkin copula is that it allows a Spearman's correlation coefficient that has an analytic form, that only depends on the parameter of the copula (as for the Gaussian copula used in the previous paragraph), and that takes value in the interval $[0,1]$.
For the simulation study three different sample sizes, $n=50, \, 150, \,250$ have been considered, in both situations of a large ($f=1/3$) and small ($f=1/10$) sampling fractions.
For each sample size and sampling fraction, $1000$ finite populations have been generated, and for each sample selected from these populations, $M=1000$ samples have been drawn.
Samples were selected according to a Conditional Poisson design. As far as the  resampling stage is concerned, a Pareto design was used.
The inclusion probabilities $\pi_i$ have been taken proportional to
$T=f(U) W$, where $U=Y+Z$, $f(u)=u^3/3-0.5u^2+0.10u+0.5$ and $W\sim \log N(0,\sigma^2)$ with $\sigma^2=0.4$ if $f=1/10$ and $\sigma^2=0.08$ if $f=1/3$.
The design variable $T$ possesses  correlation with $Y$ and $Z$ ranging in between $0.4$ and $0.5$, and a broad range of variation of the inclusion probabilities (about $[0.02,0.95]$).
Tests of different sizes $\alpha=0.1, \, 0.05, \, 0.01$ have been performed.
To evaluate the performance of our procedure, estimated type $I$ error probabilities ($\hat{\alpha}$) have been computed,  
as well as estimated power functions for different situations sampling fractions and sample sizes. 
Results are summarized below.
\begin{table}[h]
\centering\parbox{\linewidth}{

\begin{tabular}{ l|c|c||c|c||c|c|| }
  \cline{2-7}
  & \multicolumn{2}{|c||}{$\boldsymbol{\alpha =}\textbf{0.1}$} & \multicolumn{2}{|c||}{$\boldsymbol{\alpha =}\textbf{0.05}$} &\multicolumn{2}{|c||}{$\boldsymbol{\alpha =}\textbf{0.01}$} \\ \cline{2-7}

   & $f=1/10$ & $f=1/3$ & $f=1/10$ & $f=1/3$ & $f=1/10$ & $f=1/3$ \\
\hline
\multicolumn{1}{|c|}{$n=50$} & $0.124$ & $0.116$& $0.074$ & $0.065$  & $0.02$  & $0.017$ \\
\multicolumn{1}{|c|}{$n=150$}& $0.126$ & $0.11$ & $0.064$ & $0.062$  & $0.021$ & $0.012$ \\
\multicolumn{1}{|c|}{$n=250$}& $0.109$ & $0.1 $ & $0.055$ & $0.061$  & $0.012$ & $0.016$ \\
\hline
\end{tabular}
\caption{Estimated  type $I$ error probability $\hat{\alpha}$, for different sample sizes and sampling fractions.}
\label{table:6}
}
\end{table}\\
For the sake of brevity, only graphs of estimated power functions for a nominal level $\alpha=0.05$ are shown.
\begin{figure}[H]

  \centering
  \includegraphics[width=1\linewidth]{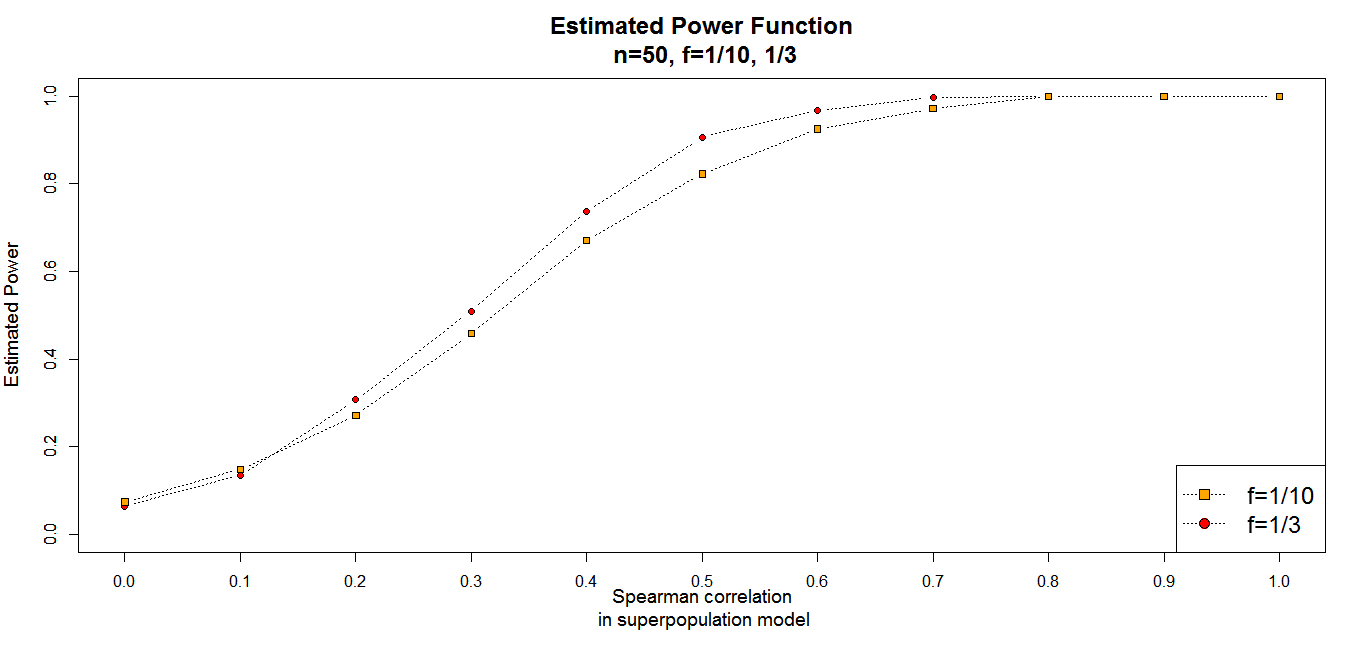}
  \caption{Estimated power function where $n=50, \ f=1/3,\ f=1/10.$}
  \label{fig:sub5}
\end{figure}

\begin{figure}[H]
  \centering
  \includegraphics[width=1\linewidth]{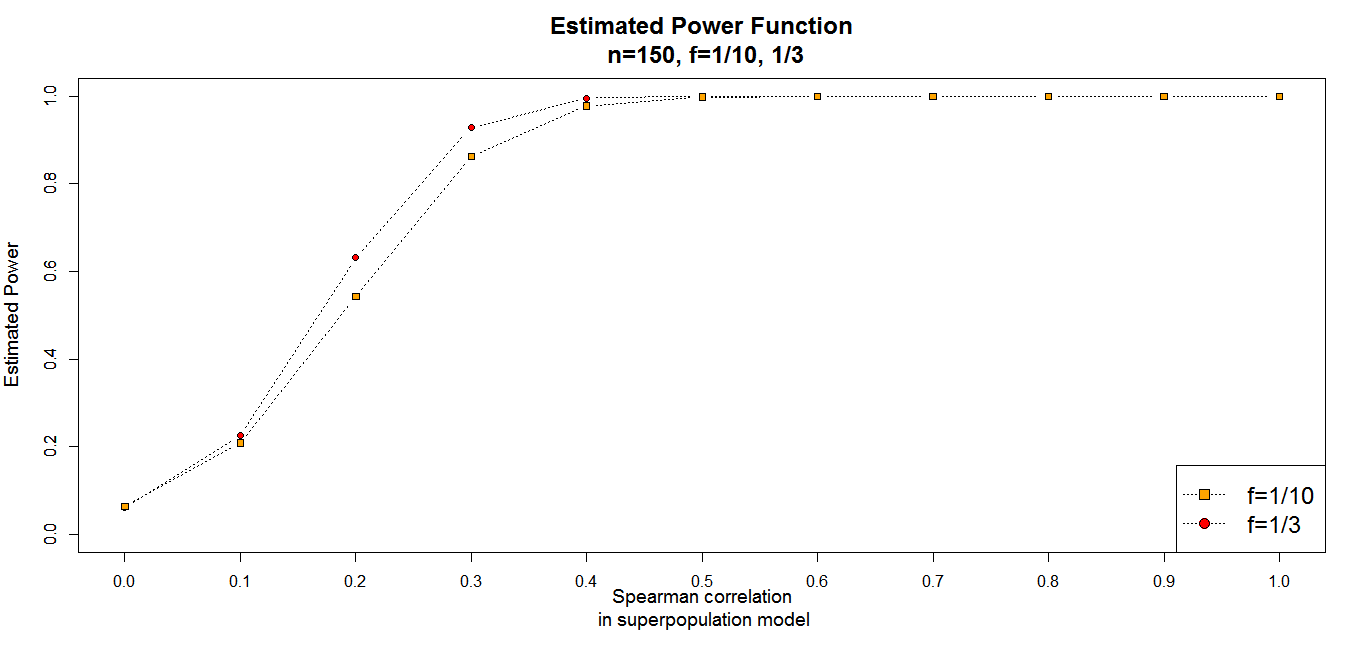}
  \caption{Estimated power function where $n=150, \ f=1/3,\ f=1/10.$}
  \label{fig:sub6}
\end{figure}

\begin{figure}[H]
  \centering
  \includegraphics[width=1\linewidth]{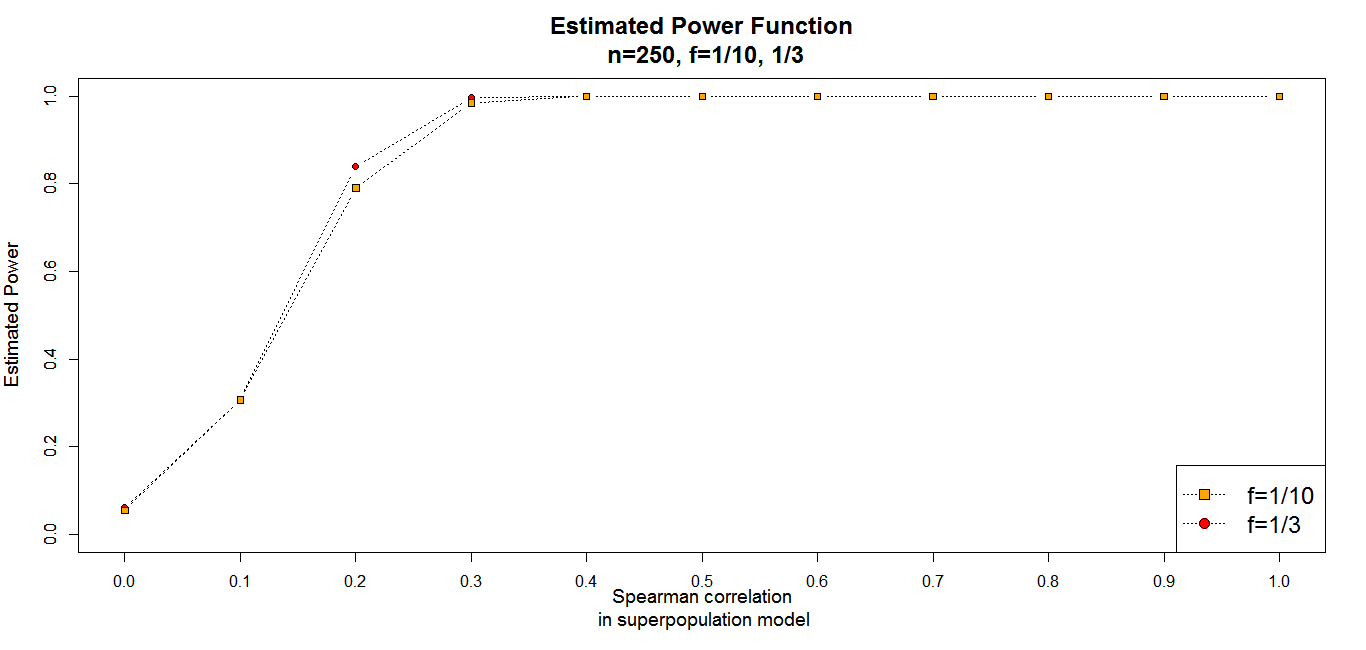}
  \caption{Estimated power function where $n=250, \ f=1/3,\ f=1/10.$}
  \label{fig:sub7}
\end{figure}
From table \ref{table:6}, it is seen that  estimated $\alpha$ is very close to the nominal $\alpha$. As expected, the largest error correspond to the smallest sample size ($n=50$) with a maximum absolute difference between $\alpha$ and $\hat{\alpha}$ of $2.4\%$. Of course, these errors decrease when the sample increases.
 As far as the sampling fractions are concerned, results in the cases $f=1/3$ and $f=1/10$ are similar; hence, the sampling fraction seems to play no special role.
The stimated power functions (figures $\ref{fig:sub5}$-$\ref{fig:sub7}$) exhibit a behavior similar to that of the estimated power functions studied in the previous section.
In fact, the estimated power function when $f=1/3$ dominates the estimated power function when $f=1/10$ for all sample sizes. Furthermore, differences between power functions decrease as the sample size increases. This suggests that the tests asymptotically have the same power,  whatever is the sampling fraction may be.

\newpage
\section{Appendix}

\begin{proof}[\textbf{Proof of Proposition \ref{main}}]
Claim $1$ is proved in \cite{contmarmec}. Claim $2$ is a consequence of Claim $1$ and Lemma 1.1 and Lemma 1.2 in \cite{r8}.
To prove Claim $3$, observe first that, from Donsker's Theorem (\cite{r19}, Th. 16.4, p. 141),
the process $W_N$ converges weakly to a Gaussian process $W_2$ with zero mean function and covariance kernel
\begin{equation}
C_2(y,t)=F(y\wedge t)-F(y)F(t)
\end{equation}
where the process $W_2 $ can be represented as
$(W_2(y)=B(F(y)),\ y\in \mathbb{R})$ where $B(t)$ is a Brownian bridge, i.e. a Browninan motion tied down to 0 when $t=1$.

To complete the proof, we only have to prove that the  asymptotic independence of the two sequences of processes $W_n^H$ and $W_N$.
To this purpose,
it is sufficient to show the asymptotic independence of their finite-dimensional distributions. Let $m, \, l$
be positive integers, and take points $m+l$ points $y_1^{(1)}, \ldots,y_m^{(1)},y_1^{(2)}, \dots,y_l^{(2)}$.
It is not difficult to see that
\begin{align*}
&\lim_{N\to\infty}Pr\left\lbrace W_n^H(y_1^{(1)})\leq z_1^{(1)},\ldots,W_n^H(y_m^{(1)})\leq z_m^{(1)},W_N(y_1^{(2)})\leq z_1^{(2)},\ldots,W_N(y_l^{(2)}) \leq z_l^{(2)}\right\rbrace= \\
&\lim_{N\to\infty}\E\left[I_{(W_n^H(y_1^{(1)})\leq z_1^{(1)},\ldots,W_n^H(y_m^{(1)})\leq z_m^{(1)},W_N(y_1^{(2)})\leq z_1^{(2)},\ldots,W_N(y_l^{(2)}) \leq z_l^{(2)})} \right]=\\
&\lim_{N\to\infty} \E_\MP\left[ \E_P[I_{(W_n^H(y_1^{(1)})\leq z_1^{(1)},\ldots,W_n^H(y_m^{(1)})\leq z_m^{(1)},W_N(y_1^{(2)})\leq z_1^{(2)},\ldots,W_N(y_l^{(2)}) \leq z_l^{(2)})}  \vert
\Y , \T] \right]=\\
&\lim_{N\to\infty}\E_\MP\left[ P\{W_n^H(y_1^{(1)})\leq z_1^{(1)},\ldots,W_n^H(y_m^{(1)})\leq z_m^{(1)}| \Y , \T
\} \cdot I_{(W_N(y_1^{(2)})\leq z_1^{(2)})}\cdots I_{(W_N(y_l^{(2)}) \leq z_l^{(2)})}  \right]=\\
&\E_\MP\left[\lim_{N\to\infty} P\{W_n^H(y_1^{(1)})\leq z_1^{(1)},\ldots,W_n^H(y_m^{(1)})\leq z_m^{(1)} | \Y , \T  \} \cdot\lim_{N\to\infty} I_{(W_N(y_1^{(2)})\leq z_1^{(2)})}\cdots I_{(W_N(y_l^{(2)}) \leq z_l^{(2)})}  \right]=\\
&Pr\{W_1(y_1^{(1)})\leq z_1^{(1)},\ldots,W_1(y_m^{(1)})\leq z_m^{(1)}\}\cdot \lim_{N\to\infty}\E_\MP\left[I_{(W_N(y_1^{(2)})\leq z_1^{(2)})}\cdots I_{(W_N(y_l^{(2)}) \leq z_l^{(2)})}  \right]=\\
&Pr\left \lbrace W_1(y_1^{(1)})\leq z_1^{(1)},\ldots,W_1(y_m^{(1)})\leq z_m^{(1)}\right \rbrace \cdot \lim_{N\to\infty}\MP\left \lbrace W_N(y_1^{(2)})\leq z_1^{(2)},\ldots,W_N(y_l^{(2)}) \leq z_l^{(2)}  \right \rbrace=\\
&Pr\left \lbrace W_1(y_1^{(1)})\leq z_1^{(1)},\ldots,W_1(y_m^{(1)})\leq z_m^{(1)}\right \rbrace Pr\left \lbrace W_2(y_1^{(2)})\leq z_1^{(2)},\ldots,W_2(y_l^{(2)}) \leq z_l^{(2)}  \right \rbrace
\end{align*}
which proves the asserted asymptotic independence.
\end{proof}

\begin{proof}[\textbf{Proof of Proposition \ref{glivcant}}]
First of all, from Proposition 1  and the Skorokhod representation theorem (cfr. \cite{r19}), it follows that
\begin{eqnarray}
\sup_y \left \vert \widehat{F}_H ( y) - F_N (y) \right \vert \rightarrow 0 \;\; {\mathrm{as}} \; N \rightarrow \infty
\label{gliv_1}
\end{eqnarray}
for a set of $\D$s with $P$-probability tending to 1, and for a set of (sequences of $Y_i$s, $T_{ij}$s having $\MP$-probability 1.
In the second place, from the ``classical'' Glivenko-Cantelli theorem, we have:
\begin{eqnarray}
\sup_y \left \vert F_H ( y) - F(y) \right \vert \rightarrow 0 \;\; {\mathrm{as}} \; N \rightarrow \infty
\label{gliv_2}
\end{eqnarray}
for a set of (sequences of) $Y_i$s, $T_{ij}$s having $\MP$-probability 1. Conclusion $( \ref{glivcantelli} )$ easily follows from
$( \ref{gliv_1} )$ and $( \ref{gliv_2} )$.
\end{proof}

\begin{Lemma} \label{lemma1}
Under the assumptions $H1$-$H6$, the quantity
\begin{eqnarray}
\frac{1}{N} \sum_{i=1}^{N} \frac{D_{i}}{\pi_{i}}  \label{ratio_1}
\end{eqnarray}
tends to $1$ as $N$ increases, for a set of (sequences of) $y_i$s, $t_{ij}$s having $\MP$-probability $1$, and for a set of $\D$s of
$P$-probability tending to $1$.
\end{Lemma}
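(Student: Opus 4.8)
The plan is to show that the random variable $\frac{1}{N}\sum_{i=1}^N D_i/\pi_i$ converges to $1$ by a two-step argument: first control its conditional behaviour given the population (i.e.\ under the sampling design $P$, with $\Y,\T$ fixed), and then invoke the superpopulation randomness only to guarantee that certain population averages of $1/\pi_i$-type quantities stabilize. Write $Z_N = \frac{1}{N}\sum_{i=1}^N D_i/\pi_i$. Conditionally on $\Y,\T$, the quantities $\pi_i$ are fixed and $\E_P[Z_N \mid \Y,\T] = \frac{1}{N}\sum_{i=1}^N \pi_i/\pi_i = 1$ exactly. So the whole problem reduces to showing $Z_N - \E_P[Z_N\mid\Y,\T] \to 0$ in $P$-probability, for $\MP$-almost all sequences of populations.

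First I would bound the conditional variance. Under a high-entropy (rejective) design, the pairwise inclusion probabilities satisfy the H\'ajek-type bound $|\pi_{ij} - \pi_i\pi_j| \leq C \pi_i\pi_j(1-\pi_i)(1-\pi_j)/d_N$ with $d_N = \sum_k \pi_k(1-\pi_k)$, and by H5 (Hellinger closeness to rejective) plus standard transfer lemmas the same order bound holds for $P$ up to $o(1)$ corrections. Then
\begin{equation}
\mathbb{V}_P\!\left(Z_N \mid \Y,\T\right) = \frac{1}{N^2}\sum_{i,j} \frac{\pi_{ij}-\pi_i\pi_j}{\pi_i\pi_j} = \frac{1}{N^2}\left(\sum_i \frac{1-\pi_i}{\pi_i} + \sum_{i\neq j}\frac{\pi_{ij}-\pi_i\pi_j}{\pi_i\pi_j}\right).
\end{equation}
The diagonal term is $\frac{1}{N^2}\sum_i (1-\pi_i)/\pi_i$; since $\pi_i = n x_i/\sum_j x_j$ and $n/N \to f \in (0,1)$ by H4, one has $\pi_i \asymp x_i/\bar X_N$, so $\sum_i 1/\pi_i = O(N)$ provided $\frac1N\sum_i 1/\pi_i$ has a finite $\MP$-a.s.\ limit — which is exactly $A = \frac{\E_\MP[X_1]}{f}\E_\MP[X_1^{-1}]$ by the SLLN, finite under H3 (via $\E_\MP[\pi_i(1-\pi_i)] = d$, which forces $\E_\MP[X_1^{-1}]<\infty$ after normalization) and H6. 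Hence the diagonal term is $O(1/N)$. For the off-diagonal term, the H\'ajek bound gives $\left|\frac{1}{N^2}\sum_{i\neq j}\frac{\pi_{ij}-\pi_i\pi_j}{\pi_i\pi_j}\right| \leq \frac{C}{N^2 d_N}\sum_{i\neq j}(1-\pi_i)(1-\pi_j) \leq \frac{C}{d_N}$, and $d_N = \sum_k \pi_k(1-\pi_k) \asymp N \E_\MP[\pi_1(1-\pi_1)] = Nd \to \infty$, so this term is $O(1/N)$ as well. Therefore $\mathbb{V}_P(Z_N\mid\Y,\T) = O(1/N) \to 0$ for $\MP$-almost all population sequences.

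Then Chebyshev's inequality conditionally on $\Y,\T$ yields $P(|Z_N - 1| > \varepsilon \mid \Y,\T) \leq \mathbb{V}_P(Z_N\mid\Y,\T)/\varepsilon^2 \to 0$ for $\MP$-a.a.\ sequences, which is precisely the claimed mode of convergence (a.s.\ in the superpopulation, in probability in the design). \textbf{The main obstacle} I expect is the rigorous transfer of the H\'ajek-type covariance bound from the rejective design $R$ to the actual design $P$ using only the Hellinger-distance hypothesis H5: one needs that $d_H(P,R)\to 0$ implies the second-order inclusion probabilities of $P$ agree with those of $R$ closely enough that the $O(1/d_N)$ bound survives. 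This is the one place where H5 is genuinely used, and it relies on the known results on high-entropy designs (cited here as \cite{r15}, \cite{r16}, \cite{r17}, \cite{r18}); everything else is bookkeeping with the SLLN and Chebyshev. An alternative that sidesteps the covariance transfer is to prove the result first for rejective $R$ as above, and then argue that the event $\{|Z_N-1|>\varepsilon\}$ has asymptotically the same probability under $P$ and $R$ because total-variation (hence also the relevant probabilities) is controlled by $d_H(P,R)^{1/2}\to 0$; I would likely present it this way to keep the argument clean.
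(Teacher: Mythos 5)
Your proposal is correct and follows essentially the same route as the paper's proof: conditional on $\Y,\T$ the design expectation is exactly $1$, the conditional variance is split into a diagonal term controlled by the SLLN applied to $N^{-1}\sum_i \pi_i^{-1}$ and an off-diagonal term controlled by the H\'ajek covariance bound $\vert \pi_{ij}-\pi_i\pi_j\vert/(\pi_i\pi_j)=O(1/N)$ for high-entropy designs, and Chebyshev closes the argument. Your explicit attention to transferring the covariance bound from the rejective design to $P$ via H5 is a point the paper's proof passes over silently (it simply cites H\'ajek--Dupac, Th.\ 7.4), but it does not change the substance of the argument.
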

\begin{proof}[\textbf{Proof}]
Conditionally on $\Y$, $\T$, the expectation of $( \ref{ratio_1} )$ w.r.t the sampling design $P$ is equal to 1.
The variance of $( \ref{ratio_1} )$ w.r.t. the sampling design $P$, conditionally on $\Y$, $\T$, is equal to
\begin{eqnarray}
\mathbb{V}_P \left (  \left . \frac{1}{N} \sum_{i=1}^{N} \frac{D_{i}}{\pi_{i}} \, \right \vert \Y , \, \T \right ) & = &
\frac{1}{N^{2}} \left \{ \sum_{i=1}^{N} \frac{1}{\pi_{i}^{2}} \mathbb{V}_P ( D_i \, \vert \Y , \, \T ) \right . \nonumber \\
\, & \, & + \left . \sum_{i=1}^{N} \sum_{j \neq i} \frac{1}{\pi_{i} \pi_{j}} \mathbb{C}_P ( D_i , \, D_J \, \vert  \Y , \, \T )
\right \}  \nonumber \\
\, & \leq & \frac{1}{N^{2}} \left \{ \sum_{i=1}^{N} \frac{1}{\pi_{i}} + \sum_{i=1}^{N} \sum_{j \neq i} \left \vert \frac{\pi_{ij} - \pi_{i} \pi_{j}}{\pi_{i} \pi_{j}} \right \vert \right \} . \nonumber
\end{eqnarray}
From $\pi_i = n x_i / \sum_{j=1}^{N} x_j$ (with $x_i = g( t_{i1}, \, \dots , \, t_{iL})$) and the strong law of large numbers, it is
 not difficult that the  $N^{-1} \sum_i \pi_i^{-1}$ converges for a set of (sequences of) $y_i$s,
$t_{ij}$s of $\MP$-probability $1$.
Furthermore, from the assumption of maximal asymptotic entropy of the sampling design implies (cfr. \cite{r16}, Th. 7.4) that
\begin{eqnarray}
\left \vert \frac{\pi_{ij} - \pi_{i} \pi_{j}}{\pi_{i} \pi_{j}} \right \vert \leq \frac{C}{N}
\nonumber
\end{eqnarray}
\noindent $C$ being an absolute constant. This shows that $( \ref{ratio_1} )$ tends to 1 as $N$ increases,
for a set of (sequences of)  $y_i$s, $t_{ij}$s of $\MP$-probability 1 and for a set of $\D$s of $P$-probability
tending to 1.
\end{proof}

\begin{proof}[\textbf{Proof of Proposition \ref{main*}}]
Tho prove Claim $1$, observe first that, as a consequence of Lemma $\ref{lemma1}$, we may write:
\begin{align}
&\E_{\MP^*}[N_i^*\vert \D ,\Y , \T]= \pi_i^{-1} D_i B_1 \\
&\mathbb{V}_{\MP^*}[N_i^*\vert \D ,\Y , \T] \leq \pi_i^{-1} D_i B_2 \\
&\vert \mathbb{C}_{\MP^*}[N_i^*,N_j^*\vert \D ,\Y , \T ] \vert \leq
c N^{-1} \pi_i^{-1} \pi_j^{-1}D_iD_j B3, \ j\neq i
\end{align}
where $B_1$ tends in $\MP$,$P$-probability to $1$ as $N$ goes to infinity, and $B_2$, $B_3$ are bounded in $\MP$,$P$-probability.
From Proposition 5 in \cite{contmarmec}, Claim $1$ follows. Claim $2$ is a consequence of Claim $1$ and Lemma 1.1 and Lemma 1.2 in \cite{r8}.

To prove Claim $3$, using the arguments in Th. 2.1 in \cite{r4} and Proposition $\ref{glivcant}$,
it follows that the process $W^{*}_N (y) = \sqrt{N} ( \widehat{F}^{*}_N (y) -
\widehat{F}_H (y))$ converges weakly to a Gaussian process of the form $B( F(y))$, $B$ being a Brownian bridge. Convergence
takes place for almost all $y_i$s, $t_{ij}$s, and for a set of $\D$s of $P$-probability tending to $1$. The asymptotic independence
between $W^{H*}_n (y)$ and $W^*_N (y)$ can be proved exactly as in Proposition  $\ref{main}$, from which $R1$ follows. $R2$ is a consequence
of the Hadamard differentiability of $\theta$.
\end{proof}

\begin{proof}[\textbf{Proof of Proposition \ref{Last-prop}}]
Let
\begin{equation*}
R_n^*(z)=P^*\{Z_{n,m}^*\leq z | s,\U^* \}
\end{equation*}
be the true (resampling) distribution function of $Z_{n,m}^*$ (defined in (\ref{Znm})). By the two sided Dvoretzky-Kiefer-Wolfowitz inequality (for more see \cite{r22} and \cite{r21}), we have
\begin{equation}
Pr\left\lbrace \sup_{z \in \mathbb{R}} \lvert \hat{R}^*_{n,M}(z)-R_n^*(z) \rvert>\epsilon \ \middle| \ s,\U \right\rbrace\leq 2e^{\{-2M\epsilon^2\}}.
\end{equation}
Taking into account that by Glivenko-Cantelli theorem (see Theorem 19.1 \cite{r9} p. 266) $R_n^*$ converges uniformly to $\Phi_{0,\sigma^2_{\theta}}$, and you have that (\ref{LP-1}) holds in probability. To obtain the almost sure convergence it is sufficient to use the Borel-Cantelli first lemma.
\end{proof}
\newpage
\nocite{*}


\begin{thebibliography}{1}





\bibitem{r10}
Antal, E. and Till{\'e}, Y. (2011). A direct bootstrap method for complex sampling designs from a finite population.
\emph{Journal of the American Statistical Association}, 106, 534--543.


\bibitem{r18}
Berger, Y. G. (1998). Rate of convergence to normal distribution for the Horvitz-Thompson estimator.
\emph{Journal of Statistical Planning and Inference}, 67, 209--226.



\bibitem{r4}
Bickel, P. J. and Freedman, D. A. (1981). Some asymptotic theory for the bootstrap.
\emph{The Annals of Statistics}, 9, 1196--1217.


\bibitem{r19}
Billingsley, P. (1968).
\emph{Convergence of probability measures}. Wiley, New York.



\bibitem{r24}
Boistard, H. and Lopuha{\"a}, H. P. and Ruiz-Gazen, A. (2015).
Functional central limit theorems in survey sampling.
\emph{ArXiv e-prints}, 1509.09273.

\bibitem{chatter11}
Chatterjee, A. (2011). Asymptotic properties of sample quantiles from a finite population.
\emph{Annals of the Institute of Statistical Mathematics},63, 157--159.


\bibitem{r12}
Cifarelli, D. M. and Conti, P. L. and Regazzini, E. (1996). On the asymptotic distribution of a general measure of monotone dependence. \emph{The Annals of Statistics}, 24, 1386--1399.


\bibitem{r1}
Cochran, W. G. (1939). The use of the analysis of variance in enumeration by sampling.
\emph{Journal of the American Statistical Association}, 34, 492--510.




\bibitem{r6}
Conti, P.L. (2014). On the estimation of the distribution function of a finite population under high entropy sampling designs, with applications. \emph{Sankhya B}, 76, 234--259.

\bibitem{contmarmec}
Conti, P.L,  Marella, D. and Mecatti, F. (2015). Recovering sampling distributions of statistics of finite populations via resampling: a predisctive approach. \emph{Submitted for publication}.


\bibitem{r5}
Conti, P. L. and Marella, D. (2015). Inference for Quantiles of a Finite Population: Asymptotic versus Resampling Results.
\emph{Scandinavian Journal of Statistics}, 42, 545--561.




\bibitem{r8}
Cs{\"o}rg{\H{o}}, S. and Rosalsky, A. (2003). A survey of limit laws for bootstrapped sums.
\emph{International Journal of Mathematics and Mathematical Sciences}, 45, 2835--2861.


\bibitem{r22}
Dvoretzky, Arye. and Kiefer, J.C. and Wolfowitz, J. (1956).
Asymptotic minimax character of the sample distribution function and of the classical multinomial estimator.
\emph{The Annals of Mathematical Statistics}, 27, 642--669.


\bibitem{r3}
Efron, B. (1979). Bootstrap methods: another look at the jackknife.
\emph{The Annals of Statistics}, 7, 1--26.



\bibitem{r17}
Grafstr{\"o}m, A. (2010). Entropy of unequal probability sampling designs.
\emph{Statistical Methodology}, 7, 84--97.

\bibitem{r23}
Gill, R. D. and Wellner, J. A. and Pr{\ae}stgaard, J. (1989). Non-and semi-parametric maximum likelihood estimators and the von {M}ises method ({P}art 1)(with discussion and reply),
\emph{Scandinavian Journal of Statistics}, 16, 97--128.



\bibitem{r15}
H{\'a}jek, J. (1964). Asymptotic theory of rejective sampling with varying probabilities from a finite population
\emph{The Annals of Mathematical Statistics}, 35, 1491--1523.


\bibitem{r16}
H{\'a}jek, J. and Dupac, V. (1981).
\emph{Sampling from a finite population}.Marcel Dekker, New York.

\bibitem{holmberg98}
Holmberg, A. (1998). A bootstrap approach to probability proportional-to-size sampling.
\emph{Proceedings of the ASA Section on Survey research Methods}, 378-383.


\bibitem{r25}
Kruskal, W.H. (1958). Ordinal measures of association. \emph{Journal of the American Statistical Association}, 53, 814--861.



\bibitem{r26}
Mai, J. F. and Scherer, M. (2012)
\emph{Simulating copulas: stochastic models, sampling algorithms and applications}. Imperial College Press, London.


\bibitem{r27}
Marshall, A.W. and Olkin, I. (1967). A multivariate exponential distribution.
\emph{Journal of the American Statistical Association}, 62, 30--44.



\bibitem{r28}
Marshall, A.W. and Olkin, I. (1967). A generalized bivariate exponential distribution.
\emph{Journal of Applied Probability}, 4, 291--302.

\bibitem{r21}
Massart, P. (1990). The tight constant in the {D}voretzky-{K}iefer-{W}olfowitz inequality.
\emph{The Annals of Probability}, 1269--1283.


\bibitem{r14}
Osier, G. (2009). Variance estimation for complex indicators of poverty and inequality using linearization techniques.
\emph{Survey Research Methods}, 3, 167--195.


\bibitem{pfeffer93}
Pfeffermann, D. (1993). The Role of Sampling Weights When Modeling Survey Data.
\emph{International Statistical Review}, 61, 317--337.




\bibitem{pfefsver04}
Pfeffermann, D. and Sverchkov, M. (2006). Prediction of finite population totals based on the sample distribution.
\emph{Survey Methodology}, 30, 79--92.



\bibitem{romano88}
Romano, J. P. (1988). A Bootstrap Revival of Some Nonparametric Distance Tests.
\emph{Journal of the American Statistical Association}, 83, 698--708.

\bibitem{romano89}
Romano, J. P. (1989). Bootstrap and Randomization Tests of some Nonparametric Hypotheses.\emph{The Annals of Statistics}, 17,
141--159.



\bibitem{r2}
S{\"a}rdnal, C. E. and Swensson, B. and Wretman, J. H. (1992).
\emph{Model Assisted Survey Sampling}, Springer-Verlag, New York.




\bibitem{r20}
Serfling, R. J. (1980). \emph{Approximation theorems of mathematical statistics}. Wiley, New York.


\bibitem{r9}
Van der Vaart, Aad W. (2000). \emph{Asymptotic statistics}. Cambridge University Press, Cambridge.

\bibitem{wang}
Wang, J. C. (2012). Sample distribution function based goodness-og-fit test for complex surveys.
\emph{Computational Statistics and Data Analysis}, 56, 664--679.


\end{thebibliography}
\end{document}